\documentclass[11pt,english]{article}
\usepackage[T1]{fontenc}
\usepackage[utf8]{inputenc}
\usepackage{xcolor}
\usepackage{amstext}
\usepackage{amsthm}
\usepackage[authoryear]{natbib}
\PassOptionsToPackage{normalem}{ulem}
\usepackage{ulem}

\makeatletter

\providecolor{lyxadded}{rgb}{0,0,1}
\providecolor{lyxdeleted}{rgb}{1,0,0}
\DeclareRobustCommand{\mklyxadded}[1]{\textcolor{lyxadded}\bgroup#1\egroup}
\DeclareRobustCommand{\mklyxdeleted}[1]{\textcolor{lyxdeleted}\bgroup\mklyxsout{#1}\egroup}
\DeclareRobustCommand{\mklyxsout}[1]{\ifx\\#1\else\sout{#1}\fi}

\theoremstyle{plain}
\newtheorem{thm}{\protect\theoremname}
\theoremstyle{definition}
\newtheorem{defn}[thm]{\protect\definitionname}

\usepackage[margin=1in]{geometry}
\usepackage{amsmath,amssymb,amsthm}
\usepackage{booktabs,tabularx}
\usepackage{natbib}
\usepackage[hidelinks]{hyperref}
\usepackage{setspace}
\usepackage{enumitem}
\usepackage{tikz}
\usepackage{xcolor}
\usetikzlibrary{positioning}

\newtheorem{proposition}{Proposition}
\newtheorem{lemma}{Lemma}

\theoremstyle{remark}

\newcommand{\dd}{\,\mathrm{d}}

\makeatother

\usepackage{babel}
\providecommand{\definitionname}{Definition}
\providecommand{\theoremname}{Theorem}

\begin{document}
\title{Tradeable Import Certificates for Strategic Supply Security\thanks{This paper generalizes and supersedes an earlier working paper circulated
under the title “Tradeable Import Certificates: A Promising Instrument
to Support Domestic Production in Strategic Sectors?” \citep{Kranz2025TIC}.}}
\author{Sebastian Kranz, Ulm University}
\date{September 2026}
\maketitle
\begin{abstract}
Recent crises have made supply security central to trade policy. We
show how tradeable import certificates (TIC) implement targets for
domestic production and reliable foreign supply while preserving gains
from trade. A single certificate market per country decentralizes
the welfare-maximizing allocation under heterogeneous targets, with
certificate prices adjusting endogenously as conditions change. TIC
robustly protect these targets against a range of deviations from
trade agreements. A model of economic coercion microfounds the targets,
linking them to shortage salience, supply reliability, bilateral trade
dependence, and norms against yielding to coercion. Tariff-subsidy
agreements require more information and are more vulnerable to hidden
deviations. While a common carbon price provides a natural focal point
for climate agreements, quantity-based security targets could be a
more natural focal point for trade agreements than tariff and subsidy
rates.

\end{abstract}
{\small\medskip{}
JEL codes: F13, F51, F52, L52}\\
{\small Keywords: }{\small\textit{tradeable import certificates, economic
security, strategic supply targets, trade agreements, economic coercion,
friendshoring}}{\small\par}

\section{Introduction}

The European Union aims to have domestic manufacturing capacity meet
at least 40 percent of its annual deployment needs for net-zero technologies
by 2030 \citep{EUNZIA2024}. China set a goal of meeting 70 percent
of its semiconductor demand from domestic production \citep{CRS2020Semiconductors}.
The U.S. Department of Defense wants a domestic rare-earth supply
chain able to cover all defense needs \citep{USDODRareEarths2024}.
After a pandemic that exposed the fragility of medical and semiconductor
supply chains, a war that turned energy dependence into a strategic
liability, and geopolitical rivalry in which market access itself
is used as a weapon, quantitative targets like these have become the
language in which governments state their economic-security goals.

How to meet such targets is far less clear. Governments combine production
subsidies, tax credits, preferential procurement, local-content rules,
tariffs, and export controls, with effects that are hard to predict
\citep{RotunnoRutaVerma2026}. Spillovers to trading partners and
their responses create a risk that uncoordinated economic-security
policies inefficiently fragment the international trading system \citep{ClaytonMaggioriSchreger2024}.
This tension is increasingly explicit in WTO reform discussions: how
can governments pursue legitimate industrial and security objectives
without eroding the gains from an open, rules-based trading system
\citep{WTOReport2026,EUWTOIndustrialPolicy2026}? An international
agreement could in principle coordinate these policies, but an agreement
written on tariff and subsidy rates faces two problems. The rates
that implement a given set of targets depend on technologies, demand,
transport costs, and other conditions that negotiators do not fully
observe and that change over time. And whatever rates are agreed,
a partner can undermine another country's target through hidden subsidies
or non-tariff barriers, which are notoriously hard to police \citep{HornMaggiStaiger2010,Gulotty2022}.

This paper proposes to write trade agreements directly around strategic
targets and to implement them through a robust market-based mechanism.
Each country combines its security objectives across products and
trading relationships into a single strategic target and implements
it with a matched system of tradeable import certificates (TIC). TIC
generalize Warren Buffett's \citeyearpar{Buffett2003} idea of import
certificates to balance the U.S. trade account. In his proposal exports
create certificates that exporters can sell and imports require certificates
that importers must surrender. Economically, the certificate price
thus works like a tariff for importers and like an export subsidy
for exporters.

TIC retain the basic logic but apply it to linear strategic supply
targets. Each production or trade flow earns or requires certificates
according to its contribution to the target, with coefficients that
may vary by product, origin, and destination. Strategically valuable
exports, secure domestic production, and reliable imports can create
certificates, while less reliable imports and domestic production
that depends on fragile upstream inputs can require them.

We establish three main results. First, the competitive equilibrium
under an agreement in which all countries implement their strategic
targets with matched TIC is constrained efficient. It maximizes ordinary
world welfare subject to every country's target, even as underlying
economic conditions change. The system is self-financing: no public
funds are required. Moreover, certificate prices make the marginal
welfare effects of the parameters of a country's strategic target
observable.

Second, a country's matched TIC robustly protects its strategic target
against a broad class of foreign policy deviations, including subsidies,
trade barriers, and quantity restrictions. This is a direct accounting
implication of the certificate mechanism. From an economic perspective,
automatic adjustment of the certificate price neutralizes the effect
of such policy measures on a binding strategic target.

Third, a bargaining model of economic coercion provides a microfoundation
for a class of \textit{linear surplus targets}. A rival threatens
a trade crisis to extract a concession, and the concession it can
extract depends on each side's politically salient shortages, the
reliability of supply from third countries, bilateral trade dependence,
and norms against yielding to coercion. We characterize for each country
the target that prevents coercion in a minimally restrictive way.
When every country adopts its corresponding target and implements
it with a matched TIC scheme, the resulting profile maximizes ordinary
welfare among all strategic target profiles that prevent coercion.

Our approach complements a growing literature that derives optimal
resilience and geoeconomic policy from disruption risk, bargaining
in the shadow of conflict, and national-security externalities \citep{GrossmanHelpmanLhuillier2023,ClaytonMaggioriSchreger2026,ClaytonMaggioriSchreger2024,Kooi2025,BeckoOConnor2025,Thoenig2023},
surveyed by \citet{MohrTrebesch2025}, and older work on protection
under embargo threats \citep{Mayer1977,Thompson1979,AradHillman1979,BergstromLouryPersson1985}.
That literature asks what security objective a country should pursue;
we ask how heterogeneous objectives, once chosen, can be implemented
in an agreement that preserves as much trade as they allow. Treating
non-economic objectives as constraints has a classical precedent \citep{Johnson1965,Bhagwati1967,BhagwatiSrinivasan1969}.
Institutionally, TIC are related to domestic-content, export-performance,
and import-export linkage requirements \citep{Grossman1981,HeranderThomas1986,Rodrik1987},
and to tradeable permits and quota licenses \citep{Dales1968,Montgomery1972,Anderson1987,KrishnaTan1999};
what is distinctive is the use of endogenously supplied, target-weighted
certificates to implement an agreement centered directly on supply
security targets. Our robustness result connects to work on incomplete
trade agreements and substitution toward non-negotiable instruments
\citep{Copeland1990,BagwellStaiger2001,Ederington2001,HornMaggiStaiger2010,LimaoTovar2011,BeverelliBoffaKeck2019}.
In climate agreements, a common carbon price provides a natural focal
point \citep{Weitzman2014,SchmidtOckenfels2021}; for trade agreements,
quantity-based security targets could instead be a more natural focal
point than tariff and subsidy rates. The same target-centered perspective
motivates our discussion of WTO rules.

Section \ref{sec:twocountry} illustrates the core mechanism and main
results in a simple two-country model. It compares an agreement based
on TIC with a calibrated tariff-subsidy agreement and with the Nash
equilibrium under uncoordinated trade policy to highlight advantages
and remaining limitations of the certificate approach. Section \ref{sec:general_tic}
develops the general model and establishes the main robustness and
efficiency results. It also studies TIC alongside other policy instruments,
including trade-related environmental measures such as the EU's CBAM,
and shows that their protective effects do not stack on top of one
another. Section \ref{sec:microfoundation} develops the microfoundation.
Section \ref{sec:discussion} discusses WTO rules and reform proposals,
related certificate systems, price-versus-quantity focal points, and
market power in certificate markets. We close with a discussion of
how TIC systems may become particularly relevant in a world with transformative
AI. Proofs are in the appendix.

\section{Motivating example\label{sec:twocountry}}

\subsection{Setting}

This section motivates our core ideas with a simple example. There
is a set $\mathcal{N}=\{A,B\}$ of two countries, indexed by $i$
and $j$, and a continuum of strategic products $m\in[0,1]$. Demand
is completely inelastic and each country always consumes one unit
of every product $m$, yielding a fixed total demand across all products
of 
\begin{equation}
Y_{i}=1\label{eq:xY}
\end{equation}
in each country $i$. Let $q_{ij}(m)\geq0$ denote the quantity of
product $m$ that is produced by country $i$ and sold in country
$j$. An allocation $q$ consists of a complete specification of all
$q_{ij}(m)$ satisfying $q_{ij}(m)+q_{jj}(m)=1$ for all $m$ and
all $i\ne j$. Let 
\begin{equation}
Q_{i}=\int_{0}^{1}\sum_{j\in\mathcal{N}}q_{ij}(m)\dd m\label{eq:xQ}
\end{equation}
denote total production by country $i$. Country $i$ produces product
$m$ at constant marginal cost $c_{i}(m)$. Worldwide total costs
for an allocation $q$ are given by
\begin{equation}
C(q)=\int_{0}^{1}\sum_{i\in\mathcal{N}}\sum_{j\in\mathcal{N}}c_{i}(m)q_{ij}(m)\dd m.\label{eq:xT}
\end{equation}
Let $V$ denote the fixed gross benefit of consumers and
\begin{equation}
W(q)=V-C(q)\label{eq:xW}
\end{equation}
denote the world ordinary welfare, which does not account for strategic
concerns. Since demand is inelastic, maximizing $W(q)$ is the same
as minimizing world production cost $C(q)$.

We assume that product markets are perfectly competitive and that
there are no transportation costs. The production cost difference
satisfies
\begin{equation}
c_{A}(m)-c_{B}(m)=-m_{0}+m,\qquad m_{0}\in(0,1).\label{eq:twogap-1}
\end{equation}
This is the continuum-of-goods linear Ricardian structure of \citet{DornbuschFischerSamuelson1977}.
Under free trade, country $A$ produces all products $m<m_{0}$ for
domestic and export markets and country $B$ produces all products
$m>m_{0}$.\footnote{How production is split over the measure-zero set of products where
countries have equal costs (in free trade, the product $m=m_{0}$)
is irrelevant for our analysis.} Country $A$'s total free-trade production is $Q_{A}^{FT}=2m_{0}$.
We assume $m_{0}<\frac{1}{2}$ so that country $A$ is a net importer
under free trade.

Country $A$ has the strategic target 
\begin{equation}
Q_{A}\geq\sigma_{A}Y_{A}=\sigma_{A}\qquad\text{with }2m_{0}<\sigma_{A}<1.\label{eq:x_targetA}
\end{equation}
This means domestic production must at least cover a share $\sigma_{A}$
of consumption. The target lies above the free trade level but still
allows positive net imports in the strategic sector. Given fixed domestic
demand $Y_{A}=1$, the target is equivalent to an absolute production
target $Q_{A}\geq\sigma_{A}$. Country $B$ has the strategic target
to produce at least its domestic demand $Q_{B}\geq Y_{B}=1$.

\subsection{Constrained efficiency}

A constrained efficient allocation $q^{*}$ maximizes ordinary welfare
$W(q)$ given the strategic targets of all countries as constraints.
Here
\begin{align}
q^{*}\in\arg\max_{q}\quad & W(q)\label{eq:x_ce_problem}\\
\text{s.t.}\quad & Q_{A}\geq\sigma_{A}\label{eq:x_A_constr}\\
 & Q_{B}\geq1\label{eq:x_B_constr}
\end{align}

The constrained efficient allocation has a simple structure: country
$A$ produces all products $m<m^{*}\equiv\frac{1}{2}\sigma_{A}$.
Figure 1(a) provides an illustration. $A$'s strategic constraint
(\ref{eq:x_A_constr}) is binding, i.e. $Q_{A}=\sigma_{A}$, while
country $B$'s remains slack. 100 percent of the products $m$ are
still traded.

\begin{figure}

\centering
\begin{tikzpicture}[x=5.8cm,y=3.0cm,font=\small]

\def\mfree{0.24}   
\def\qexp{0.23}    
\def\qhalf{0.48}   
\def\qdom{0.73}    
\def\slope{0.95}
\def\ymin{-0.58}
\def\ymax{0.78}

\begin{scope}
  \draw[->] (0,0) -- (1.04,0) node[right] {$m$};
  \draw[->] (0,\ymin) -- (0,\ymax); 
  \node[anchor=south east] at (1.1,\ymax) {$c_A(m)-c_B(m)$};

  \draw[gray] (0,0) -- (1,0);

  \draw[thick] (0,{-\slope*\mfree}) -- (1,{\slope*(1-\mfree)});

  \draw[dashed] (\qhalf,\ymin+0.02) -- (\qhalf,\ymax-0.02);
  \draw (\qhalf,0.025) -- (\qhalf,-0.025);
  \draw (\mfree,0.025) -- (\mfree,-0.025);

  \node[anchor=north] at (\mfree,-0.08) {$m_0$};
  \node[anchor=north] at (\qhalf,-0.08) {$\sigma_A/2$};

  \node[align=center] at (0.20,0.30) {$A$ exports};
  \node[align=center] at (0.78,0.30) {$B$ exports};

  \node[below=10pt] at (0.5,\ymin) {(a) Constrained efficient allocation};
\end{scope}

\begin{scope}[xshift=7.3cm]
  \fill[gray!20] (\qexp,\ymin+0.04) rectangle (\qdom,\ymax-0.08);
  
  \draw[->] (0,0) -- (1.04,0) node[right] {$m$};
  \draw[->] (0,\ymin) -- (0,\ymax);
  \node[anchor=south east] at (1.1,\ymax) {$c_A(m)-c_B(m)$};
 
  \draw[gray] (0,0) -- (1,0);

  \draw[thick] (0,{-\slope*\mfree}) -- (1,{\slope*(1-\mfree)});

  \draw[dashed] (\qexp,\ymin+0.02) -- (\qexp,\ymax-0.02);
  \draw (\qhalf,0.025) -- (\qhalf,-0.025);

  \draw[dashed] (\qdom,\ymin+0.02) -- (\qdom,\ymax-0.02);


  \node[anchor=north] at (\qhalf,-0.08) {$\sigma_A/2$};

  \node[align=center] at (0.11,0.30) {$A$\\exports};
  \node[align=center] at (0.87,0.30) {$B$ \\ exports};
  \node[align=center] at (0.48,0.50) {no\\trade};

  \node[below=10pt] at (0.5,\ymin) {(b) Nash equilibrium outcome};
\end{scope}
\end{tikzpicture}

\caption{Constrained efficient and Nash equilibrium allocations in our example.}
\label{fig:nashintuition}
\end{figure}

\subsection{Uncoordinated tariffs and subsidies}

We now extend the example to a game-theoretic model to illustrate
how uncoordinated trade policy using tariffs and subsidies can produce
an outcome in which country $A$ also exactly achieves its strategic
target while the production allocation is not constrained efficient.

Each country $i$ chooses an import tariff $t_{i}\geq0$ and export
subsidy $e_{i}\geq0$ that are applied uniformly over all products
$m$. Resulting costs for product $m$ exported from country $i$
to country $j$ are thus $c_{i}(m)+t_{j}-e_{i}$. Let $D_{i}$ denote
country $i$'s direct economic cost consisting of consumer expenditures
plus expenditures for export subsidies minus tariff income.

Country $A$ has the utility function
\begin{align}
u_{A} & =-\gamma_{A}\max\{\sigma_{A}-Q_{A},0\}-D_{A}\label{eq:uApreview}
\end{align}
where the constant $\gamma_{A}>0$ measures country $A$'s relative
weight on the strategic target compared to its direct economic costs.
The kink at the target resembles the loss-averse political objective
of \citet{FreundOzden2008}: shortfalls below the target are penalized,
overshooting brings no gain.

Country $B$ has the utility function 
\begin{align}
u_{B} & =\gamma_{B}Q_{B}-D_{B},\label{eq:uBpreview-1}
\end{align}
with a general strategic preference for more production measured by
$\gamma_{B}>0$.

In practice, the relationship between true preferences over worldwide
production allocations and officially pursued strategic targets is
likely to be complex. For example, openly pursuing targets that stipulate
export surpluses in strategic sectors likely carries a strong diplomatic
cost. Thus, even if country $B$ would, for large $\gamma_{B}$, prefer
to be the sole worldwide producer of strategically relevant products,
we do not assume that this preference translates into a strategic
target of $Q_{B}\geq2$.

We characterize the Nash equilibrium for a case where both countries
have strong strategic concerns, with a particularly high value of
$\gamma_{A}$.

\begin{proposition}[]\label{prop:nashpreview} Assume $\max\{2m_{0},1/2\}<\sigma_{A}<1$
and $\gamma_{A}-\sigma_{A}>\gamma_{B}>\frac{3}{4}-\frac{1}{2}\sigma_{A}$.
Suppose that each country simultaneously chooses an import tariff
$t_{i}\geq0$ and an export subsidy $e_{i}\geq0$. A Nash equilibrium
exists, and every Nash equilibrium satisfies:
\begin{enumerate}
\item Country $A$ exactly achieves its target, $Q_{A}=\sigma_{A}$, but
the allocation is not constrained efficient.
\item Both countries choose tariffs strictly larger than export subsidies:
$t_{A}>e_{A}\geq0$ and $t_{B}>e_{B}>0$.
\item At most 50 percent of products $m$ are internationally traded.
\end{enumerate}
\end{proposition}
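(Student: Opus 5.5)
The plan is to reduce each country's policy pair to a single effective wedge per market and then solve for the best responses explicitly. Given $(t_A,e_A,t_B,e_B)$, perfect competition with constant costs implies that product $m$ sold in country $A$ is supplied domestically iff $c_A(m)\le c_B(m)+t_A-e_B$. By (\ref{eq:twogap-1}) this is equivalent to $m\le m_0+\tau_A$, where $\tau_A\equiv t_A-e_B$. Likewise, $A$ exports to $B$ iff $m\le m_0-\tau_B$, where $\tau_B\equiv t_B-e_A$. Hence $A$ serves its home market on $[0,\bar m_A]$ with $\bar m_A=\min\{\max\{m_0+\tau_A,0\},1\}$, and exports on $[0,\underline m_B]$ with $\underline m_B=\min\{\max\{m_0-\tau_B,0\},1\}$. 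This gives $Q_A=\bar m_A+\underline m_B$ and $Q_B=2-Q_A$. The measure of traded products is $\underline m_B+(1-\bar m_A)$, which equals $1-\tau_A-\tau_B$ when both thresholds are interior. I would then write $D_A$ and $D_B$ explicitly. Consumer prices are the minimum of the competing delivered costs. The tariff revenue is $t_j$ times import volume, and the subsidy outlay is $e_i$ times export volume. Every payoff is then a piecewise quadratic in the four instruments, and best responses can be computed from first-order conditions on each region.

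The next step is the best-response analysis. For $A$, the shortfall penalty $\gamma_A$ exceeds the marginal direct cost of raising $Q_A$ through either instrument whenever $Q_A<\sigma_A$. The bound $\gamma_A-\sigma_A>\gamma_B$ together with $\sigma_A<1$ should guarantee this even after accounting for $B$'s reaction. On the other hand, there is no benefit from overshooting. This pins down $Q_A=\sigma_A$ in every equilibrium. Given that constraint, $A$ chooses the cost-minimizing split between $t_A$ and $e_A$. Raising $t_A$ distorts only the marginal product by $\tau_A$ and yields revenue. Raising $e_A$, by contrast, pays a subsidy on all inframarginal exports, which leaks to $B$'s consumers. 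This asymmetry is what should deliver $t_A>e_A$, with $e_A$ possibly zero. For $B$, the linear benefit $\gamma_B Q_B$ plays the role of a production subsidy. Because $Q_B=2-Q_A$ and $A$ holds $Q_A$ at $\sigma_A$, $B$ cannot actually raise $Q_B$ in equilibrium. However, $B$'s instruments shift the burden onto $A$ and change $B$'s terms of trade. Solving $B$'s first-order conditions given $A$'s reaction, I expect interior solutions $t_B>e_B>0$. The tariff captures terms-of-trade gains and shifts production home, while a positive $e_B$ raises $A$'s cost of meeting its target. The lower bound $\gamma_B>\tfrac34-\tfrac12\sigma_A$ should be exactly what makes $e_B$ strictly positive.

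With $\tau_A,\tau_B$ characterized, claims 1 and 3 follow. Constrained efficiency requires both thresholds to equal $\sigma_A/2$, i.e.\ $\tau_A=-\tau_B$. From claim 2 we have $\tau_A=t_A-e_B$ and $\tau_B=t_B-e_A>0$, and I would show $\tau_A\ge0$ from the first-order conditions; then $\bar m_A\neq\underline m_B$, so the allocation is inefficient. For claim 3, I would show $\tau_A+\tau_B\ge\tfrac12$ from the closed-form equilibrium wedges, using $\sigma_A>1/2$ and the bound on $\gamma_B$. Existence would be established by exhibiting the candidate profile and verifying global optimality against deviations into the other regions of the piecewise payoffs, including corner cases where a threshold hits $0$ or $1$.

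The main obstacle is the equilibrium characterization of $B$. Its payoff depends on $A$'s tariff and subsidy choice through the binding constraint $Q_A=\sigma_A$, and the simultaneous-move game means $B$ takes $(t_A,e_A)$ as fixed rather than anticipating them. The danger is that $B$'s unilateral incentive to push $Q_B$ above $2-\sigma_A$ breaks the binding-target equilibrium, or produces a region where $A$'s best response is discontinuous. I would first solve the system of four first-order conditions on the region where all thresholds are interior and $Q_A=\sigma_A$, treating $A$'s target as met with equality via a kink. I would then check deviations region by region, using the parameter inequalities to rule out profitable jumps. Uniqueness of the qualitative features, as opposed to uniqueness of the equilibrium itself, should follow because any equilibrium must satisfy these same kink and first-order conditions.
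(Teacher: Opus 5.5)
Your overall route (cutoffs, first-order conditions, case analysis, explicit construction) matches the paper's, but two steps fail as stated.

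\textbf{The inefficiency step.} It relies on ``$\tau_B=t_B-e_A>0$ from claim 2''. Claim 2 compares $t_B$ with $e_B$ and $t_A$ with $e_A$, not $t_B$ with $e_A$, and $t_B>e_A$ is false in general. Let $d$ and $z$ be the measures of products $A$ supplies at home and exports.

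In the simultaneous game, $B$ takes $(t_A,e_A)$ as given and \emph{can} raise $Q_B$. It does not anticipate any reaction by $A$. There is also no terms-of-trade motive, because prices are pinned down by constant costs. $B$'s optimality conditions equate $\gamma_B$ with the marginal direct cost of lowering $z$ (which is $t_B$) or lowering $d$ (which is $1-d+e_B$). This gives $t_B=\gamma_B$ if $z>0$, and $e_B=\max\{0,\gamma_B+d-1\}$.

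Along $d+z=\sigma_A$ with $e_A>0$, $A$'s condition is $t_A=z+e_A$. Note that $A$'s marginal cost of raising $d$ is $t_A=\tau_A+e_B$, not $\tau_A$, since $A$ forgoes $B$'s subsidy on that unit. Combined with $d=m_0+t_A-e_B$ and $z=m_0-t_B+e_A$, these conditions give $d-z=\tfrac12$, i.e.\ $\tau_A+\tau_B=\tfrac12$. This identity drives claims 1 and 3, and your proposal never derives it. It implies $z=\sigma_A/2-\tfrac14$ and $\tau_B=m_0-z$. The latter is negative for, e.g., $m_0=0.1$, $\sigma_A=0.9$, $\gamma_B=0.5$, $\gamma_A>1.4$, where $e_A=0.6>t_B=0.5$. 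Inefficiency should instead be read off $d-z=\tfrac12$ directly.

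\textbf{Coverage of every equilibrium.} Solving the interior system does not cover ``every Nash equilibrium''. Your assertion that any equilibrium satisfies the same kink and first-order conditions is false. There are corner equilibria with $z=0$ and $d=\sigma_A$. With the parameters above, the profile $e_A=0$, $t_B=2$, $e_B=0.4$, $t_A=1.2$ is one, and $t_B$ is not pinned down there.

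In these equilibria the traded share is $1-\sigma_A<\tfrac12$. That is why claim 3 says ``at most'' and claim 2 allows $e_A=0$. Claim 2 must then come from corner conditions:
\begin{itemize}
\item $B$'s condition $e_A\le\gamma_B-m_0$, together with $e_B=\gamma_B+\sigma_A-1$, gives $t_A-e_A\ge2\sigma_A-1>0$.
\item $A$'s condition $t_B-m_0\ge t_A$ gives $t_B>e_B$.
\end{itemize}
You also have to exclude $d=0$. When $z>0$, you must exclude the corner $e_A=0$, since $t_A=z+e_A$ is only available for $e_A>0$; that corner would force $\gamma_B\le(3m_0-\sigma_A)/2<\tfrac34-\tfrac{\sigma_A}{2}$, contradicting the assumption.

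\textbf{Existence (simplification).} The region-by-region global check is avoidable. Your wedges $\tau_A,\tau_B$ mix the two countries' instruments. Instead, parametrize each country's choice by the cutoffs it controls: $A$ sets $d$ via $t_A$ and $z$ via $e_A$; $B$ sets $z$ via $t_B$ and $d$ via $e_B$. Each best-response problem is then concave, so first-order and corner conditions suffice.
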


Figure 1(b) illustrates the production allocation in a Nash equilibrium.
There is an inefficient no-trade zone of half the products. The reason
for that no-trade zone is that countries set tariffs higher than export
subsidies for fiscal reasons. Export subsidies benefit foreign consumers
while their costs are borne by domestic taxpayers.\footnote{A similar effect arises for general production subsidies which also
partly benefit foreign consumers whenever there are exports.} In contrast, when neglecting their distortion of the production allocation,
tariffs are cost neutral because higher domestic consumer prices are
exactly offset by tariff revenue. A country supplements tariffs with
export subsidies only when the production inefficiency from further
import substitution becomes large relative to the fiscal cost of boosting
exports.

\subsection{Trade agreement using TIC}

Consider now a trade agreement in which each country operates a TIC
scheme and uses no further tariffs or subsidies. Every imported unit
into country $i$ from the covered sector requires one tradeable import
certificate. Every exported unit from country $i$ creates exactly
one certificate that can be freely sold by exporters. Furthermore,
country $i$ auctions $a_{i}\geq0$ additional certificates. The resulting
revenues reduce direct costs $D_{i}$. The TIC scheme implies that
\begin{equation}
Q_{i}^{imp}\leq Q_{i}^{exp}+a_{i}\label{eq:preview_tic_constraint}
\end{equation}
where $Q_{i}^{imp}$ and $Q_{i}^{exp}$ denote the imports and exports
of country $i$ respectively. A simple reformulation yields a minimum
production guarantee that follows from accounting alone, independent
of any additional trade instruments deployed by the trading partner
(as long as imports and exports are correctly measured):
\begin{equation}
Q_{i}\geq1-a_{i}\label{eq:tic_min_Q}
\end{equation}

Assume the agreement specifies that both countries choose the number
of additional certificates such that (\ref{eq:tic_min_Q}) matches
their strategic targets:
\begin{equation}
a_{A}=1-\sigma_{A},\qquad a_{B}=0\label{eq:X_a}
\end{equation}
Then (\ref{eq:tic_min_Q}) and (\ref{eq:X_a}) directly imply a simple
but very robust guarantee:

\begin{proposition}[Robust target protection of TIC]\label{prop:x_guarantee}The
implemented TIC scheme guarantees that each country's strategic target
is satisfied, independent of how the other country might deviate from
the trade agreement, as long as imports and exports are correctly
measured.

\end{proposition}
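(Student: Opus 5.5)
The argument is pure accounting and uses only the certificate constraint (\ref{eq:preview_tic_constraint}), the consumption identity $Y_i=1$ from (\ref{eq:xY}), and the agreed auction quantities (\ref{eq:X_a}). I would first make precise what a certificate scheme enforces, namely that every measured import into country $i$ surrenders one certificate. Certificates can come from only two sources: the $a_i$ auctioned by country $i$, and the exports of country $i$, each of which creates one. So the total quantity of certificates used is at most $Q_i^{exp}+a_i$. This gives (\ref{eq:preview_tic_constraint}) for any allocation that can arise once country $i$ operates its scheme. The reason is that the partner's instruments (tariffs, subsidies, quantity restrictions, non-tariff barriers) change prices and quantities but not the certificate requirement on $i$'s side of the border, given that flows are correctly measured.

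Next I would derive (\ref{eq:tic_min_Q}). For each product $m$, feasibility gives $q_{ii}(m)+q_{ji}(m)=1$. Integrating yields domestic sales $\int q_{ii}=1-Q_i^{imp}$, where $Q_i^{imp}=\int q_{ji}\dd m$. Production is $Q_i=\int q_{ii}\dd m+Q_i^{exp}$ with $Q_i^{exp}=\int q_{ij}\dd m$, so
\[
Q_i = 1 - Q_i^{imp} + Q_i^{exp} \geq 1 - a_i .
\]
Substituting (\ref{eq:X_a}) gives $Q_A\geq\sigma_A$ and $Q_B\geq 1$, which are exactly the targets (\ref{eq:x_targetA}) and $Q_B\ge Y_B$.

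Finally I would state the robustness quantifier explicitly. The derivation never used country $j$'s policy choices or the equilibrium prices, only (i) the fixed consumption of one unit per product, (ii) correct measurement of trade flows, and (iii) country $i$'s own adherence to its scheme with the agreed $a_i$. Hence the bound holds for every deviation by the partner. The main subtlety, rather than an obstacle, is making sure that nothing the partner does can relax (\ref{eq:preview_tic_constraint}). The partner might, for instance, subsidize exports to $i$ or restrict its imports from $i$, which would lower $Q_i^{exp}$. Such moves only tighten the constraint, and they leave the derived lower bound on $Q_i$ intact. I would also remark that certificates are not created from any other source, such as re-exports, which would count as measured exports. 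This is what the correct-measurement qualifier rules out.
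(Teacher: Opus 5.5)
Your proposal is correct and follows the paper's proof. Unit inelastic demand gives the identity $Q_i = 1 + Q_i^{exp} - Q_i^{imp}$, which turns the certificate constraint into $Q_i \geq 1 - a_i$ whatever the partner does, and substituting $a_A = 1-\sigma_A$, $a_B = 0$ yields both targets. One minor aside does not hold: a partner's export subsidy into $i$ does not lower $Q_i^{exp}$ (in the paper's deviation analysis it raises $\pi_i$ and hence $i$'s exports), but that remark plays no role in the argument.
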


Robust target protection relies on the fact that the strategic targets
can be implemented in autarky. It does not rule out profitable deviations
by trading partners that increase a country's direct economic costs.

Let $\pi_{i}\geq0$ denote the TIC certificate price in country $i$
formed in a perfectly competitive certificate market, characterized
by the complementary slackness condition
\[
\pi_{i}\left(Q_{i}^{exp}+a_{i}-Q_{i}^{imp}\right)=0.
\]

From the perspective of an importer, the certificate price $\pi_{i}$
works like a positive tariff. For an exporter who can sell one certificate
per exported unit, $\pi_{i}$ works like an effective export subsidy.

\begin{proposition}[Constrained efficiency of TIC agreement]\label{prop:ticpreview}
The resulting allocation under the TIC agreement is constrained efficient
and satisfies $Q_{A}=\sigma_{A}$ and $Q_{B}=2-\sigma_{A}$. The resulting
certificate prices satisfy 
\[
\pi_{A}=\frac{1}{2}\sigma_{A}-m_{0},\qquad\pi_{B}=0,
\]
and correspond to the shadow prices of the strategic constraints (\ref{eq:x_A_constr})
and (\ref{eq:x_B_constr}) in the constrained optimization problem
(\ref{eq:x_ce_problem}). \end{proposition}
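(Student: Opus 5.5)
I will prove Proposition~\ref{prop:ticpreview} in two steps. First I solve the constrained planning problem (\ref{eq:x_ce_problem}) with Lagrange multipliers. Then I show that the competitive TIC equilibrium with prices $\pi_A=\frac12\sigma_A-m_0$ and $\pi_B=0$ satisfies the same conditions, so the equilibrium allocation coincides with $q^*$.

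\emph{Planner's problem.} Attach multipliers $\lambda_A,\lambda_B\ge 0$ to (\ref{eq:x_A_constr}) and (\ref{eq:x_B_constr}). Because demand is inelastic, each unit of demand for product $m$ in country $j$ is served by whichever producer $i$ has the lowest $c_i(m)-\lambda_i$. Since $Q_A+Q_B=2$ always, the constraint $Q_B\ge 1$ is equivalent to $Q_A\le 1$.

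I guess $\lambda_B=0$. Then $A$ produces product $m$ for both markets iff $c_A(m)-c_B(m)<\lambda_A$, that is, iff $m<m_0+\lambda_A$. This gives $Q_A=2(m_0+\lambda_A)$. Because $\sigma_A>2m_0$, the free-trade allocation violates (\ref{eq:x_A_constr}), so $\lambda_A>0$ and the constraint binds. Setting $Q_A=\sigma_A$ yields $\lambda_A=\frac12\sigma_A-m_0$ and the threshold $m^*=\frac12\sigma_A$.

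It then remains to check $Q_B=2-\sigma_A\ge 1$, which holds since $\sigma_A<1$. This confirms $\lambda_B=0$ by complementary slackness. Because the problem is a linear program, these KKT conditions are sufficient for optimality. Uniqueness holds up to the measure-zero set at $m^*$, which can be verified directly: any allocation with $Q_A\ge\sigma_A$ that deviates from the threshold rule on a positive measure raises cost, since it swaps cheaper-for-$A$ products for more expensive ones.

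\emph{Decentralization.} In the TIC equilibrium, an exporter from $i$ to $j$ bears cost $c_i(m)+\pi_j-\pi_i$, while domestic producers in $j$ bear $c_j(m)$. Competition in market $j$ therefore selects $A$ over $B$ iff $c_A(m)-\pi_A < c_B(m)-\pi_B$; the price difference enters the same way in both markets. This is exactly the planner's rule with $\lambda_i$ replaced by $\pi_i$.

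With $\pi_A=\lambda_A$ and $\pi_B=0$, trade runs as follows:
\begin{itemize}
\item $A$ exports the products $m<m^*$ to $B$;
\item $B$ exports the products $m>m^*$ to $A$.
\end{itemize}
So $A$'s imports are $1-m^*$ and its exports are $m^*$.

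The certificate constraint $Q_A^{imp}\le Q_A^{exp}+a_A$ becomes $1-m^*\le m^*+1-\sigma_A$, which holds with equality. This is consistent with $\pi_A>0$. For $B$, imports are $m^*$ and exports are $1-m^*$, and $m^*\le 1-m^*$ holds (slack) because $\sigma_A<1$. This is consistent with $\pi_B=0$. Hence the complementary slackness conditions of both certificate markets hold.

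\emph{Main obstacle.} The step I expect to need the most care is ruling out other equilibrium price pairs, so that the claimed prices are the equilibrium rather than just one equilibrium.

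The allocation depends only on the difference $\pi_A-\pi_B$. So I would argue as follows. If $\pi_B>0$, $B$'s certificate constraint must bind, which requires $Q_B=1$, i.e.\ $Q_A=1>\sigma_A$. But then $A$'s constraint is slack, forcing $\pi_A=0$. This makes $\pi_A-\pi_B<0$ and pushes $Q_A$ below $2m_0<1$, a contradiction. Hence $\pi_B=0$.

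With $\pi_B=0$, the value of $\pi_A$ is pinned down by $A$'s binding constraint. It cannot be slack with $\pi_A=0$, since free trade gives $Q_A=2m_0<\sigma_A$.

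Finally, the identification $\pi_i=\lambda_i$ is immediate from matching the two sets of conditions. It shows that the certificate prices are the shadow prices of (\ref{eq:x_A_constr}) and (\ref{eq:x_B_constr}).
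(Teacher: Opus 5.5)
Your proof is correct, but it is organized differently from the paper's.

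\textbf{The paper's route.} It never solves the planner's problem first. With certificate payments included in delivered costs, every unit goes to the producer maximizing $-c_i(m)+\pi_i$. Hence any competitive equilibrium allocation maximizes $L_\pi(q)=W(q)+\pi_A S_A(q)+\pi_B S_B(q)$. Combined with certificate-market complementary slackness, this gives $W(q)=L_\pi(q)\ge L_\pi(\hat q)\ge W(\hat q)$ for every target-feasible $\hat q$. So constrained efficiency of every equilibrium, and the shadow-price role of $\pi$, follow at once without knowing the prices. Only afterwards does the paper use the threshold form of the constrained optimum to read off three facts:
\begin{itemize}
\item $Q_A=\sigma_A$;
\item $\pi_B=0$, because $B$'s target is slack;
\item $\pi_A=\sigma_A/2-m_0$, from indifference at $m=\sigma_A/2$.
\end{itemize}

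\textbf{Your route.} You reverse the order. You compute the planner's optimum by guess-and-verify KKT and construct supporting prices explicitly, which gives existence. You then rule out every other price pair by case analysis on $\pi_B>0$ and $\pi_A=0$, which gives uniqueness. Efficiency then follows by identifying the unique equilibrium with the planner's solution.

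\textbf{What each buys.} Your argument is more explicit and delivers existence and uniqueness of the equilibrium, which the paper leaves implicit. The paper's argument is shorter, and its efficiency step does not rely on the one-dimensional threshold structure. It is exactly the argument that carries over to Proposition~\ref{prop:g_efficiency}, where explicit computation is unavailable.

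A minor point: your remark on uniqueness of the planner's optimum is not needed in your route. Equilibrium uniqueness already pins down the allocation, and Lagrangian sufficiency shows it is optimal.
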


\subsection{Implementing the same allocation using tariffs and export subsidies}

A constrained efficient outcome can also be implemented with an agreement
using tariffs and export subsidies only:

\begin{proposition}[Agreement using tariffs and subsidies]\label{prop:no_tic_preview}

Consider the following trade agreement using no TIC. Both countries
set their tariffs and export subsidies equal to the certificate price
that would arise under the TIC agreement:
\[
t_{A}=e_{A}=\pi_{A}=\frac{1}{2}\sigma_{A}-m_{0},\qquad t_{B}=e_{B}=\pi_{B}=0.
\]
This agreement also implements the constrained efficient allocation
and grants both countries the same payoff as the TIC agreement.

\end{proposition}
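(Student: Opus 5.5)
Under the agreement in Proposition~\ref{prop:no_tic_preview}, I would show that every price faced by firms is the same as under the TIC agreement of Proposition~\ref{prop:ticpreview}. Then the competitive allocation is the same, and only the fiscal accounting needs checking.

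\textbf{Step 1: effective costs coincide.} With $t_A=e_A=\pi_A$ and $t_B=e_B=0$, a unit of product $m$ exported from $A$ to $B$ has effective cost $c_A(m)+t_B-e_A=c_A(m)-\pi_A$. A unit exported from $B$ to $A$ costs $c_B(m)+t_A-e_B=c_B(m)+\pi_A$. Under TIC, an $A$-exporter sells one certificate at $\pi_A$, and a $B$-exporter into $A$ must buy one at $\pi_A$. Country $B$'s certificates are free, since $\pi_B=0$. So the effective costs are identical flow by flow.

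Because markets are perfectly competitive and demand is inelastic, each market $j$ and product $m$ is served by the origin with the lowest effective cost. Using (\ref{eq:twogap-1}):
\begin{itemize}
\item In $A$'s market, $A$ produces iff $c_A(m)-c_B(m)<\pi_A$, i.e.\ $m<m_0+\pi_A=\sigma_A/2$.
\item In $B$'s market, $A$ produces iff $c_A(m)-\pi_A<c_B(m)$, which gives the same threshold.
\end{itemize}
This is exactly the allocation $m^*=\sigma_A/2$, with $Q_A=\sigma_A$. Proposition~\ref{prop:ticpreview} already established that this allocation is constrained efficient, so I would simply invoke it.

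\textbf{Step 2: equal payoffs.} Since the allocation is the same, $Q_A$ and $Q_B$ are the same, so the strategic terms in (\ref{eq:uApreview}) and (\ref{eq:uBpreview-1}) coincide. It remains to show that $D_i$ is the same under both regimes. I would compute $D_i$ as consumer expenditure plus export-subsidy outlays minus tariff revenue. Under TIC, $D_i$ is consumer expenditure minus auction revenue $\pi_i a_i$.

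Consumer prices in $A$ coincide across regimes, because the marginal effective costs coincide. So $A$'s consumer expenditure is the same. Under tariffs and subsidies, $A$'s net fiscal flow is $\pi_A(Q_A^{imp}-Q_A^{exp})$. Under TIC, $A$ receives $\pi_A a_A$. Because the certificate market clears with $\pi_A>0$, complementary slackness gives $Q_A^{imp}=Q_A^{exp}+a_A$. The two fiscal flows are therefore equal.

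For $B$, all instruments are zero in both regimes, and prices in $B$'s market coincide. So $D_B$ is unchanged.

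\textbf{Main obstacle.} The delicate step is the payoff equivalence, specifically the claim that consumer prices coincide. Under TIC, the certificate revenue of $A$'s exporters is a private transfer that flows through product prices, not through the government budget. I need to check that competitive pricing makes the consumer price in each market equal to the effective marginal cost in both regimes. With zero profits, the sum "consumer expenditure $+$ subsidies $-$ tariffs" then matches "consumer expenditure $-$ auction revenue".

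I would verify this explicitly by writing each country's total outlay as the sum of $c(m)$ over the products it consumes, plus the wedges it pays or receives. I would then confirm that the wedges net to $\pi_A a_A$ in both cases.
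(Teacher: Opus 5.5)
Your proposal is correct and follows essentially the same route as the paper: you show that delivered costs coincide flow by flow, which gives the common cutoff $\sigma_A/2$ (so $d=z=\sigma_A/2$); that consumer prices $\min\{c_A,c_B+\pi_A\}$ in $A$ and $\min\{c_A-\pi_A,c_B\}$ in $B$ coincide; and that $A$'s net fiscal revenue equals $\pi_A a_A=\pi_A(1-\sigma_A)$ in both regimes. The only cosmetic difference is that you get $Q_A^{imp}-Q_A^{exp}=a_A$ from the binding certificate constraint, whereas the paper computes imports $1-\sigma_A/2$ and exports $\sigma_A/2$ directly.
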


\subsection{Benefits of the TIC agreement}

While both agreements implement the constrained-efficient allocation,
the TIC agreement has three main benefits over the agreement using
tariffs and subsidies. First, the TIC agreement requires no information
about countries' cost structures while a precise assessment is needed
to design a constrained-efficient agreement using only tariffs and
subsidies. This informational advantage grows in an $n$-country setting
with product-specific strategic importance and different assessments
of trading-partner reliability, as in Section \ref{sec:general_tic}.
Second, automatic certificate-price adjustment allows the TIC agreement
to maintain constrained efficiency as countries' cost structures change,
whereas tariff and subsidy rates may require frequent revision. Third,
the TIC agreement robustly protects each country's strategic target.

We now show that TIC schemes also reduce \textit{incentives }to deviate.
A recurring concern in trade agreements is deviation through hidden
subsidies or non-tariff trade barriers. Assume countries can deviate
from either agreement by increasing their export subsidies $e_{i}$,
by introducing positive production subsidies $s_{i}$ that are paid
to all domestically produced units, or by erecting non-tariff trade
barriers $b_{i}$ that increase the cost of imports into country $i$
in the same way as a tariff of size $b_{i}$ but generate no tariff
income for country $i$.

\begin{proposition}[]\label{prop:deviations_preview} Consider
unilateral deviations in which a country changes only one of the following
instruments: an additional export subsidy, a production subsidy, or
a non-tariff import barrier. No country has a strictly profitable
deviation of this kind from the TIC agreement. By contrast, if $\gamma_{B}>1-\frac{\sigma_{A}}{2}$,
then country $B$ has a strictly profitable deviation from the constrained-efficient
tariff-subsidy agreement with any of these instruments. \end{proposition}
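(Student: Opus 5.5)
The plan is to reduce each part of the statement to explicit cutoff computations in the Dornbusch–Fischer–Samuelson structure. In any configuration of uniform instruments, market $j\in\{A,B\}$ is served by $A$ for all products below a cutoff $m_j$ and by $B$ above it. By (\ref{eq:twogap-1}), $m_j-m_0$ equals the net per-unit wedge favoring $A$'s producers in market $j$. Hence $Q_A=m_A+m_B$ and $Q_B=2-m_A-m_B$. I will write each country's direct cost $D_i$ as consumer expenditure plus subsidy outlays minus tariff or certificate revenue, and track how the cutoffs and $D_i$ move in the deviation parameter. I will use Proposition \ref{prop:ticpreview} and Proposition \ref{prop:no_tic_preview} as the starting points. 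At both starting points $m_A=m_B=\sigma_A/2$.

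\emph{TIC agreement.} By Proposition \ref{prop:x_guarantee}, $Q_A\ge\sigma_A$ under any deviation. Since $Q_B=2-Q_A\le 2-\sigma_A$, country $B$ can never raise its strategic term $\gamma_B Q_B$ above its agreement level. For $B$ it therefore suffices to show that each deviation weakly raises $D_B$.

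I will do this instrument by instrument. Take first an extra export subsidy $e$. The cutoff conditions become $m_A-m_0=\pi_A-e$ and $m_B-m_0=\pi_A$. While $A$'s constraint binds, $m_A+m_B=\sigma_A$, so $\pi_A=\sigma_A/2-m_0+e/2$. Then $Q_B$ is unchanged. $B$ pays $e$ on its exports $1-m_A$ and gains only through the lower price $c_A-\pi_A$ on its $m_B$ imports. This gives $\Delta D_B=e(1-m_A)-\tfrac{e}{2}m_B+\dots>0$ because $m_A,m_B<1$. Production subsidies and non-tariff barriers are treated the same way: I solve for the new $\pi_A$ and cutoffs, then check the sign of $\Delta D_B$.

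For $A$'s deviations, the target is always met, so the kinked penalty stays at zero and only $D_A$ matters. If the deviation makes $A$'s constraint slack, then $\pi_A=0$. I would then compare $D_A$ directly with the agreement level. The argument is that the agreement payoff already coincides with $A$'s best attainable payoff subject to $Q_A\ge\sigma_A$, given that $B$ runs free trade.

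The main obstacle is the bookkeeping of $D_i$ for each instrument, in particular the change in the certificate price and who captures the certificate rents. My first attempt will be the explicit linear-quadratic formulas, since all changes are linear in the instrument. A possible shortcut is a revealed-preference argument: the TIC equilibrium maximizes $-D_i$ for each $i$ given the other's certificate price.

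\emph{Tariff-subsidy agreement.} Here no certificate price absorbs $B$'s deviations, so the cutoffs move one for one with the instrument.

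\begin{itemize}
\item With an extra export subsidy $e$, $m_A$ falls by $e$, so $Q_B$ rises by $e$ at outlay $e(1-\sigma_A/2+e)$. The marginal payoff at $e=0$ is $\gamma_B-(1-\sigma_A/2)>0$.
\item With a production subsidy $s$, both cutoffs fall by $s$. The marginal benefit is $2\gamma_B$ against marginal outlay $Q_B=2-\sigma_A$, so it is positive iff $\gamma_B>1-\sigma_A/2$.
\item With a barrier $b$, $m_B$ falls by $b$ and $B$'s consumers pay $b$ more on the remaining imports $\sigma_A/2-b$, with no revenue. The marginal payoff is $\gamma_B-\sigma_A/2$, which is positive because $\gamma_B>1-\sigma_A/2>\sigma_A/2$ when $\sigma_A<1$.
\end{itemize}

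In each case a small enough deviation is strictly profitable.
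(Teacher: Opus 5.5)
\textbf{Gap: the argument for $A$'s deviations under the TIC agreement is false.} Your overall route is the paper's. You use explicit cutoffs, and the bound $Q_B\le 2-\sigma_A$ from robust target protection, so that only $D_B$ matters for $B$. For the tariff-subsidy agreement you use marginal payoffs at zero, and your derivatives for $e_B$ and $b_B$ there agree with the paper's. The problem is the TIC claim for country $A$.

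You justify it by saying the agreement payoff already equals $A$'s best attainable payoff subject to $Q_A\ge\sigma_A$, given $B$'s free trade. That is false, and so is your revealed-preference shortcut. Suppose $A$ adds a small tariff $t$ on top of its TIC. The binding certificate market gives $\pi_A=p^*-t/2$ with $p^*=\sigma_A/2-m_0$, $d=(\sigma_A+t)/2$ and $z=(\sigma_A-t)/2$. Tariff revenue offsets the tariff in consumer prices, so $D_A=\text{const}+d^2/2+z^2$ along $d+z=\sigma_A$. Its derivative at $t=0$ is $-\sigma_A/4<0$. This is the fiscal preference for tariffs the paper discusses after this proposition, and it is why $t_A>e_A$ in Proposition \ref{prop:nashpreview}.

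So the TIC outcome has no general optimality property for $A$. The result holds only because of which instruments are allowed, and it must be checked by computation:
\begin{itemize}
\item An extra export subsidy moves along the same curve $d^2/2+z^2$ in the costly direction, giving $\Delta D_A=\tfrac{e_A}{8}(2\sigma_A+3e_A)$.
\item A non-tariff barrier moves in the cheap direction but forgoes the revenue $b_A(1-d)$, netting $\Delta D_A=\tfrac{b_A}{8}(8-6\sigma_A-b_A)>0$.
\item A production subsidy $s_A\le p^*$ is absorbed one for one by $\pi_A$. It is exactly neutral, so the deviation is only weakly unprofitable.
\end{itemize}

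\textbf{Large deviations by $A$.} The computations you sketch must also be extended beyond the interior regime. For large $e_A$ or $b_A$, $\pi_A$ falls to zero and eventually $Q_B$ falls to $1$. $B$'s own certificate constraint then binds with $\pi_B>0$, so ``$B$ runs free trade'' no longer describes the environment. $D_A$ has to be tracked through that regime; for example, $dD_A/de_A=(1+3e_A)/4$ while $B$'s constraint binds.

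\textbf{$B$'s deviations under TIC.} You treat only $e_B$, and only in the interior. You still need $s_B$, where $\Delta D_B=(1-\sigma_A)s_B$, and $b_B$, where $\Delta D_B=\tfrac{b_B}{8}(2\sigma_A-b_B)$. You also need the corners $e_B\ge\sigma_A$ and $b_B\ge\sigma_A$, where a cutoff hits $0$. The positivity of your term $e(1-m_A)-\tfrac{e}{2}m_B$ follows from $m_A+m_B=\sigma_A<1$, not from $m_A,m_B<1$ alone.

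\textbf{Production subsidy under the tariff-subsidy agreement.} Your bookkeeping is off here. The subsidy on $B$'s domestic sales returns to $B$'s consumers through lower prices. Hence $\partial D_B/\partial s_B=1-\sigma_A/2$ at zero, which is the subsidy on export units only, and $\partial u_B/\partial s_B=2\gamma_B-(1-\sigma_A/2)$. Your ``iff'' is therefore wrong. Your conclusion still holds, because the true derivative is larger than yours.
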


The key intuition is that under the TIC agreement, none of the considered
deviations can increase country $B$'s total production because country
$A$'s TIC system robustly protects its target. Any subsidies or non-tariff
trade barriers by country $B$ will be offset by higher certificate
prices in country $A$. Moreover, the considered deviations are weakly
costly in direct economic terms. For the agreement using tariffs and
subsidies, hidden subsidies or non-tariff trade barriers can indeed
increase country $B$'s production by lowering $A$'s production below
its strategic target. Country $B$ will do so if it has a sufficiently
large preference for higher production.

\subsection{Fiscal preferences for tariffs}

Under both agreements, countries have fiscal incentives to deviate
unilaterally toward higher tariffs. If country $A$ introduces a tariff
in addition to its TIC, it will effectively reduce the certificate
price $\pi_{A}$. The effective export subsidy based on exporters'
income from certificate sales is thus reduced and transformed into
tariff income, which is fiscally more attractive. Yet, such a shift
from certificate prices to tariffs would generate a no-trade zone
and destroy constrained efficiency.

Similarly, countries have a fiscal incentive to modify the TIC scheme
so that the government keeps part or all of the income from certificates
generated by exports. \citet{PapadimitriouEtAl2008} actually propose
that the government should obtain all certificate income in their
numerical analysis of Buffett's original TIC proposal for the U.S.
economy. Yet, even if the government uses the certificate revenue
to reduce general taxes or subsidize domestic production, retaining
the export-generated certificate income generally creates a no-trade
zone, destroying constrained efficiency.

This tension between fiscal incentives and constrained efficiency
also arises in our general framework. On the other hand, enforcing
agreed-upon tariff rates and the agreed distribution of TIC income
may be substantially easier than preventing hidden subsidies or non-tariff
trade barriers \citep{HornMaggiStaiger2010,BeverelliBoffaKeck2019}.

\section{General framework\label{sec:general_tic}}

This section extends the core results of our example to a general
setting. We allow for $N$ countries, a finite set of products, convex
production and transportation costs, intermediate inputs, and elastic
final demand. Each country keeps a single certificate market but can
assign different strategic weights across products and account for
friendshoring relationships with trading partners. Matched TIC mechanisms
continue to robustly protect countries' strategic targets, and TIC-based
trade agreements can implement the constrained-efficient global allocation.
We also show how certificate prices measure the marginal welfare effects
of target parameters and characterize interactions with other trade
instruments such as the EU CBAM.

\subsection{Environment\label{subsec:e_env}}

There is a set $\mathcal{N}=\{1,\ldots,N\}$ of countries indexed
by $i$ and $j$. A customs union, such as the European Union, is
treated as a single country. We consider a finite set of products
$\mathcal{M}=\{1,\ldots,M\}$. Let the flow $q_{ij}(m)\geq0$ denote
the quantity of product $m$ produced in country $i$ and used in
country $j$. We denote by $q$ the complete allocation of production
and trade flows. By country $i$'s \textit{direct} flows we mean its
production flows $q_{ij}(m)$ and import flows $q_{ji}(m)$.

Total production of product $m$ in country $i$ is
\begin{equation}
q_{i}(m)=\sum_{j\in\mathcal{N}}q_{ij}(m),\label{eq:g_total_production}
\end{equation}
while total domestic use of product $m$ is
\begin{equation}
y_{i}(m)=\sum_{j\in\mathcal{N}}q_{ji}(m).\label{eq:g_consumption}
\end{equation}
We distinguish final consumption $y_{i}^{F}(m)\geq0$ from use as
an intermediate input to domestic production $y_{i}^{I}(m)\geq0$.
Let $y^{F}=(y_{1}^{F},\ldots,y_{N}^{F})$ and $y^{I}=(y_{1}^{I},\ldots,y_{N}^{I})$
describe the product use for all countries and denote by
\[
\mathcal{Y}(q)=\left\{ (y^{F},y^{I})\geq0:y_{i}^{F}(m)+y_{i}^{I}(m)=y_{i}(m)\text{ for all }i,m\right\} 
\]
the set of all feasible product uses given allocation $q$.

Each country $i$ has a representative consumer with quasilinear utility
\[
U_{i}(y_{i}^{F},\zeta_{i})=V_{i}(y_{i}^{F})+\zeta_{i},
\]
where $\zeta_{i}$ is a numeraire good with price normalized to one.
We assume that $V_{i}$ is continuously differentiable, nondecreasing,
and concave, with $V_{i}(0)=0$. Profits and policy revenues or expenditures
are rebated or financed through lump-sum transfers.

Worldwide production and transportation costs are represented by a
reduced-form function $C(q,y^{I})$. It describes the primary-resource
costs of producing and delivering $q$ when $y^{I}$ is used as intermediate
inputs, excluding the market value of those inputs. We assume that
$C$ is continuously differentiable, nonnegative, and jointly weakly
convex in $(q,y^{I})$, with $C(0,0)=0$.

We define ordinary world welfare for an allocation $q$ by optimizing
the allocation of domestic use between final consumption and intermediate
use:
\begin{equation}
W(q)=\max_{(y^{F},y^{I})\in\mathcal{Y}(q)}\left\{ \sum_{i\in\mathcal{N}}V_{i}(y_{i}^{F})-C(q,y^{I})\right\} .\label{eq:g_welfare}
\end{equation}
The maximum exists, and the resulting reduced-form welfare function
$W(q)$ is continuous and concave.

Let $\mathcal{Q}=\mathbb{R}_{+}^{N\times N\times M}$ denote the set
of possible flow allocations $q$. To rule out uninteresting cases
of nonexistence caused by unbounded quantities, we assume that the
upper contour set $\{q\in\mathcal{Q}:W(q)\geq0\}$ is compact.\footnote{Since $W(q)$ is continuous in $q$, this condition is satisfied,
for example, if $W(q)\to-\infty$ as $\|q\|\to\infty$.}

\subsection{Strategic targets\label{subsec:g_targets}}

We consider a tractable class of \textit{linear surplus targets} whose
parameters have a clear economic interpretation. These targets are
motivated by our microfoundation in Section \ref{sec:microfoundation},
where they are optimal among a large class of strategic targets.

Country $i$ may regard supply from some trading partners as substantially
more reliable than supply from others. Recent friendshoring policies
explicitly reflect this idea by seeking to preserve the gains from
international specialization while shifting critical supply chains
toward a set of trusted partners \citep{Yellen2022Friendshoring,JavorcikEtAl2024}.
Let 
\begin{equation}
\phi_{ji}(m)\geq0
\end{equation}
 denote the \textit{friendshoring factor} that country $i$ assigns
to one unit of product $m$ imported from country $j\neq i$. A factor
$\phi_{ji}(m)=0$ means that imports of product $m$ from country
$j$ receive no credit toward country $i$'s target, while $\phi_{ji}(m)=1$
means that it is strategically equivalent to securely available domestic
production. Friendshoring factors may also vary across products. A
partner may be considered highly reliable for one supply chain but
less so for another.

A linear surplus target for country $i$ also assigns a strategic
production factor to every product $m$ and destination country $j$:
\begin{equation}
\psi_{ij}(m)\in\mathbb{R}.
\end{equation}

These production factors can account for the other side of friendshoring.
Exports of critical goods to an allied country with a mutual friendshoring
agreement may be less easily redirected to domestic uses in a common
crisis than exports to a geopolitical rival, corresponding to lower
values of $\psi_{ij}(m)$ for the close ally than for the geopolitical
rival. Lower values can also reflect insecure production capability
due to supply chain risk.\footnote{Suppose a product $m$ can be produced domestically using either a
conventional process or a certified, more secure process, for example
with emergency inventories, traceable supply chains, stress-test requirements,
or second-sourcing arrangements. The model can represent such certified
secure output as a new product variant $\tilde{m}$ that is a perfect
substitute for $m$ but gets a higher production factor and has different
production costs.} In this context, a domestic production factor of one is a plausible
benchmark for fully secured domestic production. Section \ref{sec:microfoundation}
illustrates how production factors can further reflect political pressure
from lost exports in a trade crisis and leverage over a geopolitical
rival.

Combining both factors, we define the strategically secured supply
of product $m$ for country $i$ as 
\begin{equation}
\widetilde{q}_{i}(m)=\sum_{j\in\mathcal{N}}\psi_{ij}(m)q_{ij}(m)+\sum_{j\in\mathcal{N}\setminus\{i\}}\phi_{ji}(m)q_{ji}(m).\label{eq:g_tildeq}
\end{equation}
We generally identify a strategic target for country $i$ with a surplus
function $S_{i}(q)$ that must satisfy
\begin{equation}
S_{i}(q)\geq0\label{eq:strat_target_ineq}
\end{equation}
Linear surplus targets are defined by
\begin{equation}
S_{i}(q)=\sum_{m\in\mathcal{M}}w_{i}(m)\left[\widetilde{q}_{i}(m)-\sigma_{i}(m)y_{i}(m)\right],\label{eq:g_strategic_surplus}
\end{equation}
where $\widetilde{q}_{i}(m)-\sigma_{i}(m)y_{i}(m)$ measures the surplus
of secured supply above a specified share $\sigma_{i}(m)\in[0,1]$
of total domestic use $y_{i}(m)$. The \textit{product weight} $w_{i}(m)\geq0$
measures the relative weight country $i$ places on the surplus of
product $m$. 

Our main results are derived generally for targets $S_{i}(q)$ that
are linear in $q$ and depend only on country $i$'s direct flows.
This class includes linear surplus targets. Our characterization does
not extend, however, to affine targets $S_{i}(q)$ that include a
constant term. A constant is required to model an absolute production
target. With elastic demand, constrained-efficient implementation
of absolute production targets using certificates generally requires
the government to buy and surrender a corresponding fixed number of
certificates. We restrict attention to self-financing systems with
certificate obligations for firms only. The exact matching results
below rely on linearity; nonlinear targets would generally require
allocation-dependent certificate coefficients.

\subsection{Matched TIC\label{subsec:g_matched_tic}}

A TIC mechanism for country $i$ is defined by coefficients $a_{jk}^{i}(m)$
that specify the number of certificates received when $a_{jk}^{i}(m)>0$,
or surrendered when $a_{jk}^{i}(m)<0$, per unit of $q_{jk}(m)$.
Only firms responsible for the trade flows receive or must surrender
certificates: producers for $a_{ij}^{i}(m)$ and importers for $a_{ji}^{i}(m)$
with $j\ne i$. If country $i$ implements no TIC, we simply assume
$a^{i}=0$. The certificate surplus of a TIC system is defined by
\begin{equation}
A_{i}(q)=\sum_{m\in\mathcal{M}}\sum_{j\in\mathcal{N}}\sum_{k\in\mathcal{N}}a_{jk}^{i}(m)q_{jk}(m)\label{eq:TIC_A}
\end{equation}
and an allocation $q$ is feasible for a TIC system if and only if
\begin{equation}
A_{i}(q)\geq0.\label{eq:TIC_feasibility}
\end{equation}
We define the \textit{matched TIC} mechanism for a linear target by
\begin{equation}
a_{jk}^{i}(m)=\frac{\partial S_{i}(q)}{\partial q_{jk}(m)}\text{\qquad\text{for all }}j,k\in\mathcal{N},m\in\mathcal{M}\label{eq:matched_tic}
\end{equation}
Thus, for a linear surplus target, the matched TIC assigns to exports
\begin{equation}
a_{ij}^{i}(m)=\frac{\partial S_{i}(q)}{\partial q_{ij}(m)}=\psi_{ij}(m)w_{i}(m)\text{\qquad\text{for all }}j\ne i\label{eq:g_export_credit}
\end{equation}
certificates per unit. For imports, the matched TIC has
\begin{equation}
a_{ji}^{i}(m)=\frac{\partial S_{i}(q)}{\partial q_{ji}(m)}=\bigl[\phi_{ji}(m)-\sigma_{i}(m)\bigr]w_{i}(m)\text{\qquad\text{for all }}j\ne i.\label{eq:g_import_requirement}
\end{equation}
This means importers with a large friendshoring factor satisfying
$\phi_{ji}(m)>\sigma_{i}(m)$ receive certificates, while other importers
must surrender certificates. For domestic producers we find:\footnote{In the special case of perfectly inelastic demand studied in Section
\ref{sec:twocountry}, linear surplus targets can also be implemented
constrained efficiently by TIC schemes in which the state can auction
off certificates instead of granting certificates to producers for
domestic markets.}
\begin{equation}
a_{ii}^{i}(m)=\frac{\partial S_{i}(q)}{\partial q_{ii}(m)}=(\psi_{ii}(m)-\sigma_{i}(m))w_{i}(m).\label{eq:g_domestic_credit}
\end{equation}
Thus secured domestic production with $\psi_{ii}(m)=1$ will receive
certificates whenever $\sigma_{i}(m)<1$. If a high supply-chain risk
lowers the production factor below the target share, $\psi_{ii}(m)<\sigma_{i}(m)$,
the domestic production flow must surrender certificates.

Naturally, for trade flows not to or from country $i$, we have
\[
a_{jk}^{i}(m)=\frac{\partial S_{i}(q)}{\partial q_{jk}(m)}=0\text{\qquad\text{for all }}j,k\ne i
\]

\begin{proposition}\label{prop:g_representation}The matched TIC
mechanism satisfies for a linear target
\[
A_{i}(q)=S_{i}(q)\text{\qquad\text{for all }}q\in\mathcal{Q}.
\]

\end{proposition}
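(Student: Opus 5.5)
The plan is to use the fact that a linear function on $\mathcal{Q}$ is determined by its partial derivatives together with its value at the origin. Since $S_i$ is linear in $q$ (no constant term), Euler's identity for functions homogeneous of degree one gives
\[
S_i(q)=\sum_{m\in\mathcal{M}}\sum_{j\in\mathcal{N}}\sum_{k\in\mathcal{N}}\frac{\partial S_i(q)}{\partial q_{jk}(m)}\,q_{jk}(m).
\]
The partials here are constants independent of $q$. By the definition of the matched TIC in (\ref{eq:matched_tic}), each partial equals $a^i_{jk}(m)$. Substituting into (\ref{eq:TIC_A}) then yields $A_i(q)=S_i(q)$ for all $q\in\mathcal{Q}$.

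To keep the argument concrete for linear surplus targets, I would also verify the identity directly from (\ref{eq:g_strategic_surplus}). First, I substitute the definitions (\ref{eq:g_tildeq}) of $\widetilde{q}_i(m)$ and (\ref{eq:g_consumption}) of $y_i(m)=q_{ii}(m)+\sum_{j\neq i}q_{ji}(m)$. This writes $S_i(q)$ as a sum over $m$ of $w_i(m)$ times a linear combination of the flows. Next, I collect coefficients flow by flow:
\begin{itemize}
\item each export flow $q_{ij}(m)$ with $j\neq i$ gets coefficient $\psi_{ij}(m)w_i(m)$;
\item each import flow $q_{ji}(m)$ with $j\neq i$ gets $[\phi_{ji}(m)-\sigma_i(m)]w_i(m)$;
\item the domestic flow $q_{ii}(m)$ gets $[\psi_{ii}(m)-\sigma_i(m)]w_i(m)$;
\item all flows $q_{jk}(m)$ with $j,k\neq i$ get zero.
\end{itemize}
These coincide term by term with (\ref{eq:g_export_credit}), (\ref{eq:g_import_requirement}), (\ref{eq:g_domestic_credit}) and the zero coefficients for third-country flows. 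Summing over all flows reproduces $A_i(q)$.

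There is no real obstacle; this is pure bookkeeping. The one point that needs care is the diagonal flow $q_{ii}(m)$. It appears both in $\widetilde{q}_i(m)$ through $\psi_{ii}(m)$ and in $y_i(m)$ through the term $-\sigma_i(m)$, and the two contributions must be combined. The other point worth making explicit is that linearity, with no affine constant, is exactly what makes the Euler representation hold with zero intercept.
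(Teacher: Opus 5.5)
Your proposal is correct and takes the same approach as the paper. The paper's one-line proof also writes the linear $S_i$ as the sum of its constant partial derivatives times the flows and then substitutes the matched coefficients from (\ref{eq:matched_tic}). Your extra coefficient-by-coefficient check for linear surplus targets, including combining the $\psi_{ii}(m)$ and $-\sigma_i(m)$ terms on $q_{ii}(m)$, is accurate but not needed for the general statement.
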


We assume that certificates are traded on perfectly competitive markets
and denote the resulting certificate price by $\pi_{i}$. We formally
define a competitive equilibrium below, assuming that countries use
only TIC as trade instruments. The conditions adapt naturally when
additional trade instruments such as tariffs or subsidies are used.
\begin{defn}
[Competitive equilibrium under TIC] Assume countries implement only
TIC systems as trading instruments. A competitive equilibrium consists
of an allocation $(q,y^{F},y^{I})$, goods prices $p_{i}(m)\geq0$,
and certificate prices $\pi_{i}\geq0$ such that the following conditions
hold. Representative consumers optimize:
\[
y_{i}^{F}\in\arg\max_{\widehat{y}_{i}^{F}\geq0}\left\{ V_{i}(\widehat{y}_{i}^{F})-\sum_{m\in\mathcal{M}}p_{i}(m)\widehat{y}_{i}^{F}(m)\right\} \text{for all }i.
\]
Aggregate production and trading choices solve:
\[
(q,y^{I})\in\arg\max_{\widehat{q},\widehat{y}^{I}\geq0}\left\{ \sum_{i,j\in\mathcal{N}}\sum_{m\in\mathcal{M}}p_{j}(m)\widehat{q}_{ij}(m)-\sum_{i\in\mathcal{N}}\sum_{m\in\mathcal{M}}p_{i}(m)\widehat{y}_{i}^{I}(m)-C(\widehat{q},\widehat{y}^{I})+\sum_{\ell\in\mathcal{N}}\pi_{\ell}A_{\ell}(\widehat{q})\right\} .
\]
Goods markets clear according to
\[
y_{j}^{F}(m)+y_{j}^{I}(m)=\sum_{i\in\mathcal{N}}q_{ij}(m)=y_{j}(m)\qquad\text{for all }j,m,
\]
and each certificate market satisfies $A_{i}(q)\geq0$ and
\begin{equation}
\pi_{i}A_{i}(q)=0.\label{eq:pi_cs}
\end{equation}
\end{defn}

\subsection{Robust target protection\label{subsec:target_protection}}

We specify a broad class of direct policy instruments against which
the matched TIC mechanism provides robust target protection. A direct
policy instrument of country $j$ can affect the costs or revenues
of firms, consumers, or governments associated with any direct flow
of $j$, e.g. through tariffs, subsidies, non-tariff cost wedges,
or certificate schemes. It can also quantitatively restrict direct
flows, for example by banning imports of product $m$ from country
$i$. Direct quantitative restrictions by a country must admit an
allocation in which all of its international trade flows are zero;
in particular, they cannot force another country to trade.

\begin{proposition}[Robust target protection of TIC] \label{prop:rtp}
Suppose country $i$ has a linear target and implements the matched
TIC system, while every other country $j$ can use any direct policy
instrument. Then every resulting competitive equilibrium allocation
$q$ satisfies country $i$'s linear target, i.e. $S_{i}(q)\geq0$.\end{proposition}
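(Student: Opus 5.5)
The argument is essentially an accounting one, and I would build it on Proposition \ref{prop:g_representation}. The plan is to show that whatever direct policy instruments the other countries use, every competitive equilibrium must respect country $i$'s certificate-market feasibility condition $A_i(q)\geq 0$. The identity $A_i(q)=S_i(q)$ then gives the result immediately.

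\textbf{Step 1: the certificate constraint survives foreign policy.} First I would fix what a competitive equilibrium means once other countries add instruments. The remark after the definition says the equilibrium conditions "adapt naturally": foreign instruments change the firms' and consumers' objective functions through tariffs, subsidies, cost wedges or their own certificate schemes. They may also add quantitative restrictions on foreign direct flows. None of these instruments alters country $i$'s TIC coefficients $a^{i}_{jk}(m)$. Certificates are required or earned by whichever firms are responsible for the flows, regardless of foreign wedges.

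A direct instrument of country $j$ acts only on $j$'s own direct flows. It cannot create country-$i$ certificates, because issuing them is $i$'s prerogative. It also cannot waive surrender obligations at $i$'s border. So the market-clearing requirement $A_i(q)\geq 0$ (with $\pi_i A_i(q)=0$) stays part of every equilibrium definition. I will state this explicitly as the formalization of "imports and exports are correctly measured." The main point to argue here is that country $i$'s certificate market clears as a physical constraint: firms cannot hold a negative net stock of certificates.

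\textbf{Step 2: apply the representation identity.} By Proposition \ref{prop:g_representation}, $A_i(q)=S_i(q)$ holds for all $q\in\mathcal{Q}$ when the TIC is matched. So any equilibrium allocation satisfies $S_i(q)=A_i(q)\geq 0$. This already proves the claim, provided equilibria exist. Existence is not asserted: the statement is about every resulting equilibrium.

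\textbf{Step 3, and the expected obstacle: equilibria are not ruled out trivially.} The delicate part is consistency. I want to check that the claim is not vacuous, and that the foreign quantitative restrictions cannot make $A_i\geq0$ incompatible with equilibrium in a way that would force one to drop $i$'s constraint. This is where the assumption enters that foreign restrictions must admit zero international trade flows. Under autarky, with all cross-border flows zero, $S_i(q)=\sum_m w_i(m)(\psi_{ii}(m)-\sigma_i(m))q_{ii}(m)$. That expression is nonnegative at least at $q_{ii}=0$, and more generally whenever $\psi_{ii}\geq\sigma_i$. So the feasible set defined jointly by $i$'s constraint and the foreign restrictions is nonempty, and the certificate price $\pi_i$ can adjust to clear the market.

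I would present this as a remark rather than a necessary step, because the proposition only asserts the inequality for equilibrium allocations. The core proof is the two-line chain $S_i(q)=A_i(q)\geq0$.
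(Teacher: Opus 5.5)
Your proposal is correct and follows the paper's proof: every competitive equilibrium must satisfy country $i$'s certificate-balance condition $A_i(q)\geq 0$, which foreign direct instruments cannot alter, and Proposition~\ref{prop:g_representation} turns this into $S_i(q)\geq 0$. Your Step~3 non-vacuity remark is not needed, as you note yourself, since the claim concerns only equilibria that exist; it parallels the paper's informal comment that the result relies on linear targets being implementable under autarky.
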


The strong robust-target-protection result relies on linear targets
being implementable under autarky in our model. Robust protection
does not rule out inefficient outcomes if trading partners deviate.

\subsection{Constrained efficiency\label{subsec:g_efficiency}}

A constrained-efficient allocation maximizes ordinary world welfare
subject to every country's strategic targets: 
\begin{align}
q^{*}\in\arg\max_{q\in\mathcal{Q}}\quad & W(q)\label{eq:g_planner}\\
\text{s.t.}\quad & S_{i}(q)\geq0, & i\in\mathcal{N} & ,\label{eq:g_constr}
\end{align}

\begin{proposition}[Constrained efficiency of TIC] \label{prop:g_efficiency}Assume
every country has a linear target and implements as its sole trade
instrument the matched TIC system. Then every competitive equilibrium
allocation is constrained efficient, and the certificate price $\pi_{i}$
is a shadow price of country $i$'s strategic constraint (\ref{eq:g_constr}).
A constrained-efficient allocation and a competitive equilibrium exist.
Conversely, every constrained-efficient allocation can be decentralized
as a competitive equilibrium under the matched TIC.

\end{proposition}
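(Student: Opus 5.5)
The plan is to run a first-welfare-theorem argument, with the certificate prices acting as Lagrange multipliers, and to use Proposition \ref{prop:g_representation} to replace every $A_{\ell}(q)$ by $S_{\ell}(q)$.

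\textbf{Step 1: the aggregate equilibrium problem.} Take a competitive equilibrium $(q,y^{F},y^{I},p,\pi)$. Sum the consumers' problems and the aggregate firm problem, and use goods-market clearing. The price terms $\sum_{j,m}p_{j}(m)[\sum_{i}\widehat q_{ij}(m)-\widehat y_{j}^{F}(m)-\widehat y_{j}^{I}(m)]$ vanish on any feasible $(\widehat q,\widehat y^{F},\widehat y^{I})$ with $(\widehat y^{F},\widehat y^{I})\in\mathcal{Y}(\widehat q)$. It follows that $(q,y^{F},y^{I})$ maximizes
\[
\sum_{i}V_{i}(\widehat y_{i}^{F})-C(\widehat q,\widehat y^{I})+\sum_{\ell}\pi_{\ell}S_{\ell}(\widehat q)
\]
over all such feasible triples. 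Maximizing first over uses gives that $q$ maximizes the Lagrangian $L(\widehat q,\pi)=W(\widehat q)+\sum_{\ell}\pi_{\ell}S_{\ell}(\widehat q)$ over $\mathcal{Q}$.

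\textbf{Step 2: sufficiency.} Since $q$ maximizes $L(\cdot,\pi)$, $S_{\ell}(q)\ge 0$ by certificate-market feasibility, and $\pi_{\ell}S_{\ell}(q)=0$ by (\ref{eq:pi_cs}), the standard saddle-point argument applies. For any $\widehat q$ satisfying all targets,
\[
W(\widehat q)\le L(\widehat q,\pi)\le L(q,\pi)=W(q).
\]
So $q$ is constrained efficient and $\pi$ is a vector of Kuhn--Tucker multipliers, that is, shadow prices.

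\textbf{Step 3: existence of a constrained optimum.} Autarky with zero flows, $q=0$, satisfies every linear target and gives $W(0)\ge 0$, since $V_{i}(0)=0$ and $C(0,0)=0$. Hence the feasible set intersected with $\{W\ge 0\}$ is nonempty and compact, as a closed subset of a compact set. Continuity of $W$ then gives a maximizer $q^{*}$.

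\textbf{Step 4: converse and existence of equilibrium.} Given $q^{*}$, obtain multipliers $\pi^{*}\ge 0$ with $q^{*}\in\arg\max L(\cdot,\pi^{*})$ and complementary slackness. The problem is concave with linear constraints, so the linear-constraint form of the Kuhn--Tucker/strong duality theorem supplies multipliers without a Slater condition. One could instead note that autarky-type perturbations often give Slater, but linear constraints suffice. It is safest to apply the duality theorem to the finite-dimensional problem in $(q,y^{F},y^{I})$ directly, with the linear equalities $y^{F}+y^{I}=\sum_{i}q_{i\cdot}$ and the nonnegativity constraints. Their multipliers define goods prices $p_{j}(m)$, which are nonnegative because $V$ is nondecreasing.

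With these prices, the joint maximization decomposes into the consumers' problems and the firms' problem, because the objective is additively separable once the market-clearing constraints are priced. This yields a competitive equilibrium supporting $q^{*}$. Existence of a competitive equilibrium then follows from Step 3.

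\textbf{Main obstacle.} The hard step is Step 4: guaranteeing the existence of multipliers, including goods prices, for a merely concave, continuously differentiable problem with constraint qualification issues at boundary points. I would first rely on the linearity of all constraints, which gives an Abadie-type qualification so that Karush--Kuhn--Tucker multipliers exist. I would then use concavity to turn the KKT conditions into global optimality of each agent's decomposed problem. A secondary point is ensuring $p_{j}(m)\ge 0$. I would handle it by writing market clearing as the inequality $y_{j}^{F}+y_{j}^{I}\le\sum_{i}q_{ij}$, which is w.l.o.g. binding given free disposal and monotone $V$, so the multipliers are signed.
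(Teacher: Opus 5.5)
Your proposal is correct and follows the paper's proof essentially step for step. Both arguments sum the agents' optimality conditions under goods-market clearing to show that $q$ maximizes $W(\widehat q)+\sum_{\ell}\pi_{\ell}S_{\ell}(\widehat q)$, then close with the saddle-point argument via complementary slackness, obtain existence from feasibility of $q=0$ and the compact upper contour set, and prove the converse with Kuhn--Tucker multipliers for the polyhedral problem in $(q,y^{F},y^{I})$, where goods prices are the multipliers on market clearing and are nonnegative because $V_{i}$ is nondecreasing (your inequality formulation of market clearing is just an optional alternative for that sign).
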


A straightforward corollary is that if an efficient free-trade allocation
satisfies all countries' targets, it can be supported as a competitive
equilibrium if all countries use matched TIC systems as their only
trade instruments.

\subsection{Measuring marginal welfare effects of changes in strategic targets\label{sec:cost_targets}}

Certificate prices make the marginal effects of different parameters
of a strategic target on ordinary welfare observable. Assume every
country implements only the matched TIC system, so that a constrained-efficient
allocation $q^{*}$ is implemented. Let $W^{*}$ denote constrained-efficient
ordinary welfare and let $\theta_{i}$ denote any parameter of country
$i$'s strategic target. At parameter values where $W^{*}$ is differentiable,
the envelope theorem gives 
\begin{equation}
\frac{dW^{*}}{d\theta_{i}}=\pi_{i}\frac{\partial S_{i}(q^{*})}{\partial\theta_{i}}.\label{eq:dW_dtheta}
\end{equation}
This expression measures the marginal effect on ordinary world welfare
of a marginal change in that strategic parameter.

For a linear surplus target, the expression specializes as follows.
A change in the target share for product $m$ gives 
\begin{equation}
\frac{dW^{*}}{d\sigma_{i}(m)}=-\pi_{i}w_{i}(m)y_{i}^{*}(m).\label{eq:c_share_shadow}
\end{equation}

Increasing the friendshoring factor assigned to imports of product
$m$ from country $j$ relaxes country $i$'s target in proportion
to the current import flow: 
\begin{equation}
\frac{dW^{*}}{d\phi_{ji}(m)}=\pi_{i}w_{i}(m)q_{ji}^{*}(m).\label{eq:c_phi_shadow}
\end{equation}
Increasing the strategic production factor assigned to exports from
$i$ to $j$ does the same in proportion to the current export flow.
Thus 
\begin{equation}
\frac{dW^{*}}{d\psi_{ij}(m)}=\pi_{i}w_{i}(m)q_{ij}^{*}(m),\qquad j\neq i.\label{eq:c_psi_shadow}
\end{equation}
For the product weight $w_{i}(m)$ we find 
\begin{equation}
\frac{dW^{*}}{dw_{i}(m)}=\pi_{i}\left[\widetilde{q}_{i}^{*}(m)-\sigma_{i}(m)y_{i}^{*}(m)\right].\label{eq:c_weight_shadow}
\end{equation}
The term in brackets is product $m$'s contribution to the strategic
surplus per unit of its weight. Increasing the weight of a product
that is currently abundant in strategic terms relaxes the target;
increasing the weight of a strategically scarce product tightens it.

\subsection{Interaction of TIC with other policy instruments\label{subsec:g_policy_overlap}}

TIC are designed to protect a strategic supply target directly, not
to replace other policy instruments aimed at environmental, innovation,
regional, or other domestic objectives. As an example, consider the
EU Carbon Border Adjustment Mechanism (CBAM). Importers must surrender
CBAM certificates according to the emissions embedded in covered imports,
and certificate prices are tied to EU ETS allowance prices \citep{EUCBAM2023}.

Are such policy instruments more or less problematic in a world where
countries also use matched TIC mechanisms? Assume country $i$ introduces
a CBAM. Country $j\neq i$ may distrust the stated environmental motivation
and suspect that country $i$ uses such a measure to give an unfair
advantage to strategically relevant domestic industries; see \citet{BagwellStaiger2001,Ederington2001,Lee2007}
for the literature on trade agreements and domestic policies. A matched
TIC system in country $j$ should alleviate one part of this concern,
because country $j$'s strategic target is robustly protected as long
as it remains feasible.

But what about the case that country $j$'s strategic constraint is
slack while country $i$'s strategic constraint binds with $\pi_{i}>0$?
Does adding another policy instrument, such as CBAM, in country $i$
lead to excessive strategic protection relative to a pure TIC scheme?
The following result shows that this is not the case.

\begin{proposition}[Strategic non-additivity] \label{prop:g_strategic_nonadditivity}
Suppose country $i$ implements the matched TIC for its linear strategic
target $S_{i}(q)$. Assume the strategic target binds in the initial
competitive equilibrium with a strictly positive certificate price.
Consider the introduction of an additional policy instrument while
country $i$'s matched TIC mechanism remains unchanged. Assume a competitive
equilibrium still exists. Then either country $i$'s strategic surplus
remains zero, or country $i$'s strategic target becomes slack with
a certificate price of zero.

\end{proposition}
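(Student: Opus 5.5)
The plan is to combine two facts. The first is the complementary-slackness condition (\ref{eq:pi_cs}) in the new competitive equilibrium. The second is the identity $A_i(q)=S_i(q)$ from Proposition \ref{prop:g_representation}. Because country $i$'s matched TIC is left unchanged when the additional instrument is introduced, its certificate surplus is still given by the same coefficients $a^i_{jk}(m)$. Hence $A_i(q)=S_i(q)$ continues to hold for every allocation, whatever the new instrument does to costs, revenues or quantities.

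First I will note that the new equilibrium is still an equilibrium of country $i$'s certificate market. So it satisfies feasibility, $A_i(q')\ge 0$, with a price $\pi_i'\ge0$ and $\pi_i'A_i(q')=0$. The definition of competitive equilibrium only needs to be adapted so that firms' objectives include the wedges of the additional instrument; the certificate-market conditions are unaffected. Writing $S_i(q')=A_i(q')$, the argument then splits into two exhaustive cases. If $S_i(q')=0$, the strategic surplus remains zero. If $S_i(q')>0$, then $A_i(q')>0$, and complementary slackness forces $\pi_i'=0$, so the target is slack with a zero certificate price. The case $S_i(q')<0$ cannot occur, because feasibility gives $A_i(q')\ge0$. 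This is the same accounting argument that underlies Proposition \ref{prop:rtp}.

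I will also remark on why the initial situation matters. There, $S_i=0$ and $\pi_i>0$. The conclusion says the target's protection does not ``stack'': either the extra instrument displaces part of the certificate price while the surplus stays at zero, or it overshoots the target and the certificate price collapses to zero. In neither case is there a positive certificate price together with excess strategic surplus. This interpretation requires no further argument.

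The main point to check is whether an additional instrument could interfere with the certificate market itself. Examples would be an instrument that alters the certificate coefficients, a government purchase of certificates, or a quantity restriction that makes feasibility fail. The statement rules out the first by holding the matched TIC unchanged, and it rules out the last by assuming that an equilibrium exists. I will state explicitly that the additional instrument enters only firms' payoffs or quantity constraints and not the definition of $A_i$. Under that reading, complementary slackness and $A_i\ge0$ survive verbatim in the adapted equilibrium definition, and the argument above goes through.
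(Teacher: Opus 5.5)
Your proposal is correct and follows essentially the same route as the paper: the unchanged matched TIC keeps $A_i=S_i$, so feasibility $S_i(q^1)\ge 0$ and complementary slackness $\pi_i^1 S_i(q^1)=0$ in the new equilibrium give the dichotomy, while $S_i(q^0)=0$ follows from $\pi_i^0>0$ in the initial equilibrium. The paper adds only a side remark you do not need: in the slack case all TIC terms vanish from firms' problem, so $q^1$ would remain an equilibrium even if country $i$'s TIC were removed.
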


The proposition gives a useful sense in which strategic protection
from TIC and other policy instruments does not stack. Yet, even when
a CBAM leaves country $i$'s strategic surplus $S_{i}(q)$ unchanged,
it can still affect the allocation $q$. In line with the environmental
goals of CBAM, imports with higher embedded emissions face relatively
larger costs, preserving incentives to source from cleaner foreign
producers \citep{FischerFox2012,MehlingRitz2023}.

\section{A microfoundation for linear surplus targets\label{sec:microfoundation}}

\subsection{Overview}

Actual security objectives reflect political judgments and geopolitical
contingencies that are difficult to model comprehensively. Section
\ref{sec:general_tic} therefore takes strategic targets as primitive.
This section presents a simple economic model that provides a coherent
microfoundation in which a particular linear surplus target $S^{*}$
can prevent coercion by trading partners in a minimally restrictive
way. Furthermore, $S^{*}$ maximizes welfare among all strategic target
profiles that prevent coercion.

Consider the economic environment from Section \ref{sec:general_tic}.
For every country $i$, there is one geopolitical rival $j\neq i$
that can initiate coercive bargaining by threatening to trigger a
trade crisis. Throughout this section, $i$ denotes the defender and
$j$ the coercer.

The timing is as follows.
\begin{enumerate}
\item \textit{Strategic targets and trade agreement. }Countries state strategic
targets and may sign a trade agreement.
\item \textit{Normal trade policy and long-run adjustment.} Countries implement
their trade policy. They can deviate from the trade agreement using
direct trade instruments like hidden subsidies. Then a market allocation
$q$ arises. We interpret $q$ as a long-run allocation to which productive
capacity and usage patterns have adapted.
\item \textit{Coercive bargaining.} After the long-run allocation and associated
capacities are in place, country $i$'s geopolitical rival $j$ can
initiate coercive bargaining by threatening to trigger a trade crisis.
The bargaining outcome depends on a threat point determined by political
pressures from shortages and lost export markets during the trade
crisis. We assume a political advantage for the defender that depends
on the strength of an informal norm to resist coercion. Coercive bargaining
is initiated only if $j$ predicts that it can extract a positive
concession from $i$.
\end{enumerate}
All crisis parameters below should be interpreted as components of
country $i$'s ex ante security assessment rather than as a complete
model of the true crisis allocation. 

\subsection{The defender's self-assessed crisis exposure}

We first describe the supply available for domestic use that country
$i$ expects to keep in the trade crisis. Let 
\begin{equation}
g_{ki}(m)\in[0,1],\qquad k\neq i,\label{eq:mf_g}
\end{equation}
denote the expected fraction of normal imports $q_{ki}(m)$ from country
$k$ to $i$ that remains available in the crisis. If the coercer
is expected to stop all exports to the defender in the crisis then
$g_{ji}(m)=0$.

Let 
\begin{equation}
d_{ik}(m)\in[0,1],\label{eq:mf_d}
\end{equation}
denote the fraction of the defender's normal production sold in country
$k$ that can be made available for domestic use at short notice.
Expected supply chain interruptions correspond to lower values of
$d_{ik}(m)$. For exports, lower values of $d_{ik}$ also correspond
to committed exports that cannot be easily redirected to domestic
use.

Country $i$'s assessed crisis availability of product $m$ is therefore
\begin{equation}
y_{i}^{c}(m)=\sum_{k\in\mathcal{N}}d_{ik}(m)q_{ik}(m)+\sum_{k\in\mathcal{N}\setminus\{i\}}g_{ki}(m)q_{ki}(m).\label{eq:mf_crisis_consumption_i}
\end{equation}

Let $\omega_{i}^{y}(m)>0$ measure the assessed political salience
in country $i$ of a shortage of product $m$. We define politically
weighted normal use and assessed crisis availability by 
\begin{align}
Y_{i} & =\sum_{m}\omega_{i}^{y}(m)y_{i}(m),\label{eq:mf_Y_i}\\
Y_{i}^{c} & =\sum_{m}\omega_{i}^{y}(m)y_{i}^{c}(m).\label{eq:mf_Yc_i}
\end{align}

The weights $\omega_{i}^{y}(m)$ need not exactly coincide with ordinary
expenditure or marginal-utility weights. A product could account for
modest ordinary expenditure but be highly salient when an abrupt shortage
threatens essential economic functions.

Political pressure may arise not only from domestic shortages but
also from firms and workers when production for export markets drops.
For products where the pressure on country $i$ from lost exports
appears stronger than country $j$'s own pressure from shortages,
country $j$ may therefore deliberately restrict its imports. 

Let $\chi_{ij}(m)\in[0,1]$ denote the fraction of country $i$'s
normal exports of product $m$ to $j$ that is expected not to be
sold during a crisis. Let $\omega_{i}^{x}(m)\geq0$ measure the political
pressure per unit of lost exports. We define the export-market component
of country $i$'s crisis loss as
\begin{equation}
L_{i}^{x}(q)=\sum_{m}\omega_{i}^{x}(m)\chi_{ij}(m)q_{ij}(m).\label{eq:mf_Lix}
\end{equation}

\subsection{The defender's assessment of the coercer's crisis exposure}

Country $i$ generally has less information about country $j$'s true
crisis position. In particular, it may not know which third countries
would continue supplying $j$. We consider a conservative assessment
in which country $i$ assumes that third-country deliveries to $j$
continue normally. Let
\begin{equation}
\eta_{ij}(m)\in[0,1]\label{eq:mf_eta-1}
\end{equation}
denote the defender's assessed fraction of its direct exports of product
$m$ to country $j$ that country $j$ loses and cannot replace during
the trade crisis. Let $\omega_{j}^{y}(m)\geq0$ denote country $i$'s
assessment of the political salience of product $m$ in country $j$.
The resulting shortage component of country $j$'s crisis loss is
\[
L_{j}^{y}(q)=\sum_{m}\omega_{j}^{y}(m)\eta_{ij}(m)q_{ij}(m).
\]

Country $i$ also assesses the coercer's dependence on access to country
$i$'s market. Let $\chi_{ji}(m)\in[0,1]$ denote the assessed fraction
of country $j$'s exports to market $i$ that is lost and cannot be
replaced at short notice, and let $\omega_{j}^{x}(m)\geq0$ denote
their assessed political salience in country $j$. The corresponding
export-market loss is
\begin{equation}
L_{j}^{x}(q)=\sum_{m}\omega_{j}^{x}(m)\chi_{ji}(m)q_{ji}(m).\label{eq:mf_Ljx}
\end{equation}

\subsection{Political pressure, norms against coercion, and the Nash bargaining
solution}

The coercer's domestic political pressure during the trade crisis
is
\begin{equation}
L_{j}(q)=L_{j}^{y}(q)+L_{j}^{x}(q)\geq0.\label{eq:mf_Lj}
\end{equation}

We now make a key assumption that limits the scope for successful
coercion attempts. We assume that there is a social norm against giving
in to coercion. We model this norm by allowing the defender's politically
weighted domestic availability to fall by a share $\alpha_{i}\in[0,1]$
before shortages generate political pressure. We define the defender's
political pressure during the crisis by 
\begin{align}
L_{i}(q) & =\max\{L_{i}^{y}(q)+L_{i}^{x}(q),0\}\qquad\text{with }\label{eq:mf_Pi}\\
L_{i}^{y}(q) & =(1-\alpha_{i})Y_{i}-Y_{i}^{c}
\end{align}
Larger values of $\alpha_{i}$ can be interpreted as stronger public
and institutional support for political leaders to withstand coercive
pressure during a crisis.\footnote{A related interpretation is that a coercive crisis is a strong breach
of the trade agreement. Even if non-coercion is not exogenously enforceable,
the non-coercion outcome can still shape subsequent bargaining positions
and perceived entitlements \citep{EdlinReichelstein1996,GoldlueckeKranz2023,HartMoore2008}.}

Bargaining outcomes are described by the symmetric Nash bargaining
solution. The disagreement payoffs during the crisis are $\left(-L_{i}(q),\,-L_{j}(q)\right)$.\footnote{See e.g. \citet{GrossmanHelpman1995FTA} and \citet{GrossmanHelpman1995TradeWars}
for other trade models with bargaining solutions based on political
targets that are affected by economic fundamentals.} If the countries reach an agreement, the trade crisis is lifted immediately
and the crisis-induced political pressures disappear. Let $x$ denote
the value of the concession made by country $i$ to country $j$,
measured in the same normalized payoff units as political pressure.
We assume perfectly transferable utility, so settlement payoffs are
$(-x,x)$. The Nash bargaining solution with transferable utility
maximizes $(L_{i}-x)(L_{j}+x)$, yielding 
\begin{equation}
x=\tfrac{1}{2}(L_{i}-L_{j}).\label{eq:mf_x}
\end{equation}

\subsection{Strategic targets and main results}

We assume country $j$ initiates coercive bargaining against $i$
only if the bargaining outcome yields a positive concession $x>0$
from country $i$. Whether that is the case depends on the equilibrium
market allocation $q$. Let 
\begin{equation}
\mathcal{P}_{i}=\{q\in\mathcal{Q}:L_{i}(q)\leq L_{j}(q)\}\label{eq:mf_P}
\end{equation}
denote the set of allocations $q$ for which $x\leq0$, so that coercive
bargaining against country $i$ is prevented.

Let $\mathcal{S}_{i}=\{S_{i}\mid S_{i}:\mathcal{Q}\to\mathbb{R}\}$
denote the set of all strategic targets for country $i$. For every
$S_{i}\in\mathcal{S}_{i}$, define
\[
\mathcal{F}_{i}^{*}(S_{i})=\{q\in\mathcal{Q}:S_{i}(q)\geq0\}
\]
as the set of allocations satisfying the target. No functional-form
restriction is imposed on targets in $\mathcal{S}_{i}$; in particular,
they need not be linear or implementable by a matched TIC system.
We say that $S_{i}$ \textit{prevents coercion} if 
\begin{equation}
\mathcal{F}_{i}^{*}(S_{i})\subseteq\mathcal{P}_{i}.\label{eq:mf_prevents_coercion}
\end{equation}

Among strategic targets that prevent coercion, a natural criterion
is to avoid restrictions that provide no additional security. We say
that a strategic target $S_{i}$ prevents coercion in a \textit{minimally
restrictive} way if 
\begin{equation}
\mathcal{F}_{i}^{*}(S_{i})=\mathcal{P}_{i}.\label{eq:mf_min_restrictive}
\end{equation}
\begin{proposition}\label{prop:S_star} The linear surplus target
$S_{i}^{*}$ with parameters 
\begin{align}
w_{i}(m) & =\omega_{i}^{y}(m),\label{eq:mf_map_w}\\
\phi_{ki}(m) & =g_{ki}(m),\qquad k\neq i,j,\label{eq:mf_map_phi_other}\\
\phi_{ji}(m) & =g_{ji}(m)+\frac{\omega_{j}^{x}(m)}{\omega_{i}^{y}(m)}\chi_{ji}(m),\label{eq:mf_map_phi_j}\\
\psi_{ik}(m) & =d_{ik}(m),\qquad k\neq j,\label{eq:mf_map_psi_other}\\
\psi_{ij}(m) & =d_{ij}(m)+\frac{\omega_{j}^{y}(m)}{\omega_{i}^{y}(m)}\eta_{ij}(m)-\frac{\omega_{i}^{x}(m)}{\omega_{i}^{y}(m)}\chi_{ij}(m),\label{eq:mf_map_psi_j}\\
\sigma_{i}(m) & =1-\alpha_{i},\qquad m\in\mathcal{M}.\label{eq:mf_map_s}
\end{align}
prevents coercion in a minimally restrictive way.\end{proposition}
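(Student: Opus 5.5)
The plan is to prove the set equality $\mathcal{F}_{i}^{*}(S_{i}^{*})=\mathcal{P}_{i}$ by establishing a single accounting identity. For every $q\in\mathcal{Q}$ I would show
\[
S_{i}^{*}(q)=L_{j}(q)-\bigl(L_{i}^{y}(q)+L_{i}^{x}(q)\bigr),
\]
and then read off the equivalence $S_{i}^{*}(q)\geq0\iff L_{i}(q)\leq L_{j}(q)$.

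First I would check that the parameters are admissible for a linear surplus target. We need $w_{i}(m)=\omega_{i}^{y}(m)\geq0$, which holds because $\omega_{i}^{y}(m)>0$; this also makes the divisions in (\ref{eq:mf_map_phi_j}) and (\ref{eq:mf_map_psi_j}) well defined. We need $\phi_{ki}(m)\geq0$, which holds since $g$, $\chi$ and $\omega_{j}^{x}$ are nonnegative. The production factors $\psi_{ik}(m)$ are only required to be real, so a negative $\psi_{ij}(m)$ is allowed. We need $\sigma_{i}(m)=1-\alpha_{i}\in[0,1]$. Note that (\ref{eq:mf_map_psi_other}) covers $k=i$, so $\psi_{ii}(m)=d_{ii}(m)$.

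Next I would substitute the parameters into (\ref{eq:g_strategic_surplus}) and (\ref{eq:g_tildeq}) and sort the terms. The weight $w_{i}(m)=\omega_{i}^{y}(m)$ cancels the denominators $\omega_{i}^{y}(m)$ in the correction terms. The pieces then match up as follows:
\begin{itemize}
\item The terms $\omega_{i}^{y}(m)\bigl[\sum_{k}d_{ik}(m)q_{ik}(m)+\sum_{k\neq i}g_{ki}(m)q_{ki}(m)\bigr]$ sum over $m$ to $Y_{i}^{c}$ by (\ref{eq:mf_crisis_consumption_i}) and (\ref{eq:mf_Yc_i}).
\item The term $-\omega_{i}^{y}(m)(1-\alpha_{i})y_{i}(m)$ sums to $-(1-\alpha_{i})Y_{i}$. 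Together with the previous item this gives $-L_{i}^{y}(q)$.
\item The extra import credit $\omega_{j}^{x}(m)\chi_{ji}(m)q_{ji}(m)$ sums to $L_{j}^{x}(q)$.
\item The extra export terms $\omega_{j}^{y}(m)\eta_{ij}(m)q_{ij}(m)$ and $-\omega_{i}^{x}(m)\chi_{ij}(m)q_{ij}(m)$ sum to $L_{j}^{y}(q)$ and $-L_{i}^{x}(q)$ respectively.
\end{itemize}
This yields the identity above, using (\ref{eq:mf_Lj}).

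Finally I would conclude using the max in (\ref{eq:mf_Pi}) together with $L_{j}(q)\geq0$, which holds because all components of $L_{j}$ are nonnegative. If $S_{i}^{*}(q)\geq0$, then $L_{i}^{y}+L_{i}^{x}\leq L_{j}$, and also $0\leq L_{j}$, so $L_{i}=\max\{L_{i}^{y}+L_{i}^{x},0\}\leq L_{j}$. Conversely, $L_{i}\leq L_{j}$ implies $L_{i}^{y}+L_{i}^{x}\leq L_{i}\leq L_{j}$, hence $S_{i}^{*}(q)\geq0$. Thus $\mathcal{F}_{i}^{*}(S_{i}^{*})=\mathcal{P}_{i}$, which is exactly (\ref{eq:mf_min_restrictive}); in particular (\ref{eq:mf_prevents_coercion}) holds.

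The only real obstacle is bookkeeping in the second step. The flows $q_{ij}(m)$ to the rival and $q_{ji}(m)$ from the rival are the only ones carrying the extra coefficients, so I will keep the $k=j$ terms separate from the $k\neq j$ terms. The max-truncation in $L_{i}$ is handled entirely by the nonnegativity of $L_{j}$.
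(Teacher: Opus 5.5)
Your proposal is correct and follows essentially the same route as the paper: you show that $S_{i}^{*}(q)\geq0$ is, term by term, the inequality $L_{i}^{y}(q)+L_{i}^{x}(q)\leq L_{j}(q)$, and you use $L_{j}(q)\geq0$ to absorb the max in $L_{i}$, which gives $\mathcal{F}_{i}^{*}(S_{i}^{*})=\mathcal{P}_{i}$. Your explicit check that the mapped parameters satisfy the requirements of a linear surplus target ($w_{i}\geq0$, $\phi\geq0$, $\sigma_{i}\in[0,1]$, well-defined divisions by $\omega_{i}^{y}$) is a small addition that the paper leaves implicit.
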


The microfoundation provides a structural interpretation for the parameters
of linear surplus targets. Product weights $w_{i}(m)$ reflect the
political salience of product shortages. For third-country imports,
the friendshoring factor $\phi_{ki}(m)$ is simply the assessed share
that remains available in a crisis. For imports from the coercer,
$\phi_{ji}(m)$ additionally reflects the coercer's dependence on
access to country $i$'s market. Thus even imports from the coercer
can have strategic value when disrupting trade is politically costly
for the supplier itself. More generally, $\phi_{ji}(m)>1$ is possible.

For exports to third countries, $\psi_{ik}(m)$ measures short-run
redirectability toward domestic use. For exports to the coercer, $\psi_{ij}(m)$
also incorporates two opposing bilateral dependencies: leverage from
exports that are hard for $j$ to replace raises the production factor,
while country $i$'s own dependence on the export market lowers it.
Hence $\psi_{ij}(m)$ can exceed one and, in sufficiently asymmetric
cases, can be negative. If $\psi_{ij}(m)$ is negative, the matched
TIC system requires the corresponding exporters to surrender certificates. 

Finally, $\sigma_{i}(m)=1-\alpha_{i}$ links the required secured-supply
share to the norm against yielding to coercion: a stronger norm relaxes
the target. In this sense, stronger informal norms not to yield to
coercion and tighter formal supply security targets are substitutes
for each other.

We close this section with a welfare result that provides further
support for $S^{*}=(S_{1}^{*},\ldots,S_{N}^{*})$ as an attractive
profile of strategic targets.

\begin{proposition}\label{prop:optimal_target} Let 
\begin{equation}
\mathcal{S}^{P}=\left\{ S\in\prod_{i\in\mathcal{N}}\mathcal{S}_{i}:\mathcal{F}_{i}^{*}(S_{i})\subseteq\mathcal{P}_{i}\text{ for every }i\in\mathcal{N},\quad\bigcap_{i\in\mathcal{N}}\mathcal{F}_{i}^{*}(S_{i})\neq\emptyset\right\} \label{eq:mf_SP}
\end{equation}
denote the set of jointly feasible profiles of strategic targets that
prevent coercion in every country. For every $S\in\mathcal{S}^{P}$,
define the highest welfare level compatible with the target profile
by
\begin{equation}
\overline{W}(S)=\sup_{q\in\bigcap_{i\in\mathcal{N}}\mathcal{F}_{i}^{*}(S_{i})}W(q).\label{eq:mf_welfare_sup}
\end{equation}
We have $S^{*}\in\mathcal{S}^{P}$ and $S^{*}$ maximizes welfare
among all target profiles that prevent coercion:
\[
\overline{W}(S^{*})\geq\overline{W}(S)\qquad\text{for every }S\in\mathcal{S}^{P}.
\]
\end{proposition}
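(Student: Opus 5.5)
The argument rests on Proposition \ref{prop:S_star}, which gives $\mathcal{F}_{i}^{*}(S_{i}^{*})=\mathcal{P}_{i}$ for every $i$. The idea is that $S^{*}$ makes each country's feasible set as large as coercion prevention allows, so its intersection contains the intersection of any other admissible profile. The proof then proceeds in three steps.

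\textbf{Step 1: $S^{*}$ prevents coercion.} Proposition \ref{prop:S_star} gives $\mathcal{F}_{i}^{*}(S_{i}^{*})=\mathcal{P}_{i}$, so in particular $\mathcal{F}_{i}^{*}(S_{i}^{*})\subseteq\mathcal{P}_{i}$ holds for every $i$. This is the first condition in (\ref{eq:mf_SP}).

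\textbf{Step 2: the intersection is nonempty.} We need $\bigcap_{i}\mathcal{P}_{i}\neq\emptyset$, and I would use the autarky allocation, where all international flows vanish. Under autarky every $q_{ij}$ and $q_{ji}$ with $j\neq i$ is zero.
\begin{itemize}
\item Country $j$'s losses are then $L_{j}^{y}=L_{j}^{x}=0$ and $L_{i}^{x}=0$.
\item Crisis availability satisfies $y_{i}^{c}(m)=d_{ii}(m)q_{ii}(m)$ and use satisfies $y_{i}(m)=q_{ii}(m)$.
\item Hence $L_{i}^{y}=\sum_{m}\omega_{i}^{y}(m)\bigl(1-\alpha_{i}-d_{ii}(m)\bigr)q_{ii}(m)$, which need not be nonpositive.
\end{itemize}
A cleaner choice is the zero allocation $q=0$. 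There $L_{i}(0)=\max\{0,0\}=0\leq L_{j}(0)=0$, so $0\in\mathcal{P}_{i}$ for every $i$ and the intersection is nonempty. This is consistent with the paper's remark that linear targets are satisfiable in autarky: $S_{i}(0)=0$, and $0$ belongs to $\mathcal{Q}=\mathbb{R}_{+}^{N\times N\times M}$. Steps 1 and 2 together give $S^{*}\in\mathcal{S}^{P}$.

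\textbf{Step 3: $S^{*}$ maximizes welfare.} Take any $S\in\mathcal{S}^{P}$. By definition $\mathcal{F}_{i}^{*}(S_{i})\subseteq\mathcal{P}_{i}=\mathcal{F}_{i}^{*}(S_{i}^{*})$ for each $i$. Intersecting over $i$ gives
\[
\bigcap_{i}\mathcal{F}_{i}^{*}(S_{i})\subseteq\bigcap_{i}\mathcal{F}_{i}^{*}(S_{i}^{*}).
\]
Both sets are nonempty, and the supremum of $W$ over a subset is at most the supremum over the superset. Therefore $\overline{W}(S)\leq\overline{W}(S^{*})$.

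One could add that the supremum defining $\overline{W}(S^{*})$ is finite and attained. For this I would use that $\{q:W(q)\geq0\}$ is compact, that $W$ is continuous, that $W(0)=0$ because $V_{i}(0)=0$ and $C(0,0)=0$, and that each $\mathcal{P}_{i}$ is closed because the $L$'s are continuous. Attainment is not needed for the inequality itself.

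The main obstacle is Step 2, namely confirming that the zero allocation really lies in $\mathcal{Q}$ and yields $L_{i}=L_{j}=0$. If one wanted a nontrivial feasible allocation instead, the argument would need extra conditions on $d_{ii}$ and $\alpha_{i}$. Since the proposition only asks for nonemptiness, the zero allocation suffices.
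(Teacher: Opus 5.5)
Your proposal is correct and follows the paper's proof essentially step for step: the paper likewise uses $\mathcal{F}_{i}^{*}(S_{i}^{*})=\mathcal{P}_{i}$ from Proposition \ref{prop:S_star}, establishes nonemptiness via $q=0$ (where all crisis-loss terms vanish), and derives $\overline{W}(S)\leq\overline{W}(S^{*})$ from the inclusion of the intersections. It also gives the same attainment argument you sketch, based on closedness of $\bigcap_{i}\mathcal{P}_{i}$, $W(0)=0$, and compactness of $\{q:W(q)\geq0\}$.
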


Since every $S_{i}^{*}$ is a linear surplus target, welfare-maximizing
allocations under $S^{*}$ can be robustly implemented as competitive
equilibria using matched TIC systems.

\section{Discussion\label{sec:discussion}}

\subsection{WTO rules }

A detailed legal assessment of TIC is beyond the scope of this paper.
At face value, however, the proposed TIC mechanism is likely to be
inconsistent with current WTO rules. Yet, WTO reform discussions now
explicitly cover industrial subsidies, policy space, transparency,
and level-playing-field concerns. The 2026 World Trade Report frames
part of the institutional challenge as preserving ``legitimate policy
space'' while avoiding a shift toward ``unilateral measures or fragmented
approaches'' \citep{WTOReport2026}. The European Union's July 2026
reform submission similarly recognizes that state interventions can
pursue legitimate public-policy objectives while generating negative
spillovers, and calls for work on disciplines, transparency, and remedies
\citep{EUWTOIndustrialPolicy2026}.

Under current WTO rules, GATT Article XI may pose an issue for the
import side of TIC. It generally prohibits quantitative import restrictions
implemented through quotas, licensing requirements, or other measures.
India-Autos provides a particularly close precedent. India required
automobile manufacturers to balance the value of specified imports
with exports, and the panel found that linking permissible imports
to export performance constituted a restriction on importation under
Article XI:1 \citep{WTOIndiaAutos2002}.

One possible modification is a price safety valve. Assume the government
of country $i$ commits to sell unlimited additional certificates
at a fixed price $\bar{\pi}_{i}$. This resembles tariff-rate quotas,
which WTO jurisprudence has treated as outside the prohibition of
Article XI:1 because additional imports remain possible at the higher
tariff \citep{WTOECBananas2008}.\footnote{Article II disciplines on tariffs and other border charges would then
become relevant.} Economically, $\bar{\pi}_{i}$ caps the shadow cost on ordinary welfare
from the strategic target, at the expense of robust target protection
whenever the cap binds.

An exporter's revenue from certificate sales might qualify as a prohibited
export-contingent subsidy under the SCM Agreement \citep{WTO1994SCM}.
A TIC mechanism in which the government, rather than exporters, receives
all certificate income removes this export-side reward. Indeed, countries
have unilateral fiscal incentives to adopt that design. Yet, doing
so destroys constrained efficiency. In our motivating example, such
a TIC mechanism for country $A$ implements the same outcome as an
agreement with $t_{A}=\pi_{A}>e_{A}=0$ and yields an inefficient
no-trade zone.

The possibility that an efficient agreement combines a positive import
tariff with an offsetting export subsidy is not unique to our framework.
In standard perfect-competition terms-of-trade models, cooperative
efficiency can be implemented with equal and offsetting import and
export wedges \citep{Maggi2014,Grossman2016}, while \citet{BagwellStaiger2012}
obtain the same equality in a Cournot delocation model.\footnote{\citet{Collie2000} obtains a rationale for banning strategic export
subsidies under Cournot competition and sufficiently costly public
funds, but the strategic incentives reverse under price competition
\citep{EatonGrossman1986}.}

One reason for strict WTO subsidy rules may be administrability. Subsidies
are unusually opaque and difficult to measure \citep{Gulotty2022}.
The SCM Agreement therefore uses a relatively bright-line prohibition,
including de facto export contingency; the Appellate Body has explained
that this extension was meant to ``prevent circumvention'' \citep{WTOExportContingency2013}.
The legal consequence is important: prohibited subsidies receive accelerated
dispute settlement without the adverse-effects showing required for
actionable subsidies \citep{WTO1994SCM,WTOProhibitedSubsidiesOverview}.
\citet{GreenTrebilcock2007} likewise argue that a simple prohibition
can be preferable to a more flexible standard.

Matched TIC systems address much of this institutional concern. Their
export rewards are not hidden discretionary interventions but follow
mechanically from published target parameters and observable certificate
prices. Robust target protection also reduces the need to police foreign
subsidies merely to prevent them from undermining an agreed strategic
target. These features make matched TIC a natural candidate for consideration
in WTO reform discussions.

More generally, WTO scrutiny could move upstream from the implementing
instrument to the strategic target. Target shares, product weights,
and bilateral friendshoring factors should be disciplined where they
disguise protectionism or unjustifiably discriminate among trading
partners. Once a target is accepted as legitimate, however, international
rules should not force countries away from its constrained-efficient
implementation absent an additional cross-border harm. This target-centered
approach may offer a useful principle for current WTO reform debates
over economic security and state intervention.

\subsection{Related certificate trading systems}

The most economically important applications of tradeable permits
today are emissions trading systems such as the EU ETS. Our constrained-efficiency
result has a direct counterpart in this literature. \citet{Dales1968}
and \citet{Montgomery1972} show that, conditional on an aggregate
emissions constraint, competitive permit trading implements the least-cost
allocation, while the permit price reflects the shadow value of relaxing
the constraint.

The non-additivity result from Subsection \ref{subsec:g_policy_overlap}
has an opposite appeal in emissions and strategic trade policy. \citet{HerwegSchmidt2022}
illustrate the ``waterbed effect'' of cap-and-trade: under a binding
cap, additional voluntary abatement does not reduce aggregate emissions.
For strategic supply, the analogous non-additivity seems desirable.
Trading partners may value the assurance that additional domestic
policies cannot raise strategic surplus beyond the agreed target while
TIC remain binding.

More closely related to TIC with endogenously generated certificates
are tradeable green certificates and tradeable performance standards.
Eligible renewable generation can create certificates that obligated
suppliers must acquire, while performance standards assign credits
or obligations relative to a benchmark \citep{AmundsenMortensen2001}.
Like TIC, such systems use an endogenous certificate price to generate
differentiated rewards and charges. TIC apply this logic to international
strategic-supply targets, with certificate coefficients that can vary
by product, origin, and destination.

\subsection{Focal points in international agreements: prices versus quantities}

In climate policy, \citet{Weitzman2014} and \citet{SchmidtOckenfels2021}
argue that a common carbon price can provide a simpler focal point
than a vector of national quantity commitments. Trade negotiations
may instead favor quantity targets as focal points. Strategic targets
can be transparent and symmetrically structured, while the vector
of tariffs and subsidies that implements them may be highly asymmetric
and depend on technologies and trade responses that are difficult
to observe or forecast. TIC allow countries to negotiate the former
and let certificate markets determine the latter. This advantage is
strongest for relatively coarse rules; extensive differentiation of
product weights and friendshoring factors can itself make the target
system difficult to negotiate.

\subsection{Market power in certificate markets and the case for broad targets\label{subsec:market_power}}

New policy instruments are often introduced with a narrow scope to
build experience. Unfortunately, it is a bad idea to introduce TIC
initially only for a small sector such as rare-earth minerals because
this can lead to domestic producers with market power in the certificate
market. Such a producer could limit domestic production and exports
to raise the certificate price. A higher certificate price increases
revenue per certificate sold and provides stronger protection of the
domestic market by increasing the effective tariff.\footnote{A formalization of this insight for a simple oligopoly model in the
certificate market can be found in the predecessor of this paper \citet{Kranz2025TIC}.}

Our strong recommendation is to introduce TIC only with sufficiently
broad targets that cover enough markets so that no single producer
can meaningfully affect certificate prices by restricting supply.
Higher friendshoring factors for countries with producers competing
with domestic producers can also help limit market power in the certificate
market.

\subsection{Transformative AI}

The possibility of transformative AI has been a very important motivation
for this paper. Historically, institutions that foster economic efficiency
and international competition have been crucial for productivity growth
and rising living standards. However, very large AI-driven productivity
gains may shift political priorities away from maximizing ordinary
economic surplus. Voters and governments may instead place greater
weight on the distribution of those gains, fiscal capacity, human
control over important decisions, and the preservation of meaningful
roles for humans \citep{KorinekLockwood2026}.

Unconstrained international competition could make it harder to achieve
such objectives. While by no means certain, AI and advanced robotics
may eventually be able to automate a large share of work in strategically
important sectors \citep{TrammellKorinek2026}. If production becomes
much less dependent on human labor, it could become much more mobile
across countries. Taxing or regulating AI in one country may induce
relocation toward jurisdictions with less stringent policies. This
concern is familiar from international tax competition \citep{KatoLoebbing2023}.
To preserve domestic policy space, broader strategic supply targets
could therefore become a natural response by governments.

Moreover, rapid technological change could alter comparative advantage,
production costs, and the location of production much faster than
fixed tariff or subsidy schedules can be renegotiated. TIC, with certificate
prices that automatically adjust to changing economic conditions,
may become particularly attractive.

It is far from certain that the world will soon resemble these transformative
AI scenarios. But we should plan for them.

\bibliographystyle{apalike}
\bibliography{gtic}

\appendix

\section{Proofs}

Subsections A.1, A.2, and A.3 contain the proofs for all results in
Sections \ref{sec:twocountry}, \ref{sec:general_tic}, and \ref{sec:microfoundation},
respectively.

\subsection*{A.1 Proofs for Section 2}

Throughout, product markets are competitive with constant marginal
costs, so each market is served by the supplier with the lower delivered
cost. Let $d\in[0,1]$ denote the measure of products that country
$A$ supplies in its own market and $z\in[0,1]$ the measure it exports
to $B$, so that $Q_{A}=d+z$ and $Q_{B}=2-d-z$. Direct cost $D_{i}$
is consumer expenditure minus net government revenue from all trade
instruments (tariffs, subsidies, certificate auctions). Let $F(x)=x^{2}/2-m_{0}x$
and $[x]_{0}^{1}=\max\{0,\min\{1,x\}\}$.

\begin{lemma}[Cutoffs and direct costs]\label{lem:margins} Suppose
countries use only tariffs $t_{i}\ge0$ and export subsidies $e_{i}\ge0$.
Then 
\[
d=[m_{0}+t_{A}-e_{B}]_{0}^{1},\qquad z=[m_{0}-t_{B}+e_{A}]_{0}^{1},
\]
and, up to constants that do not depend on any policy, 
\[
D_{A}=F(d)-e_{B}(1-d)+e_{A}z,\qquad D_{B}=F(z)-e_{A}z+e_{B}(1-d).
\]
\end{lemma}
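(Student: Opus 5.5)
The plan is to compute everything product by product, using the fact stated at the start of this appendix: with constant marginal costs and perfect competition, each market for product $m$ is served by the supplier with the lower delivered cost, and the consumer price equals that minimal delivered cost. Ties occur only on a measure-zero set of products and can be ignored, as in the footnote in Section \ref{sec:twocountry}.

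\textbf{Cutoffs.} First consider country $A$'s market.
\begin{itemize}
\item The domestic producer's delivered cost is $c_{A}(m)$.
\item The importer's delivered cost is $c_{B}(m)+t_{A}-e_{B}$, since $A$'s tariff is added and $B$'s export subsidy is subtracted.
\end{itemize}
By (\ref{eq:twogap-1}), $A$ serves its own market iff $m-m_{0}<t_{A}-e_{B}$, that is iff $m<m_{0}+t_{A}-e_{B}$. Since $c_{A}-c_{B}$ is increasing in $m$, the set of such products is an initial interval of $[0,1]$. Its measure is therefore the clamped value $d=[m_{0}+t_{A}-e_{B}]_{0}^{1}$.

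Symmetrically, in $B$'s market $A$'s delivered cost is $c_{A}(m)+t_{B}-e_{A}$ and $B$'s is $c_{B}(m)$. So $A$ exports iff $m<m_{0}-t_{B}+e_{A}$, which gives $z=[m_{0}-t_{B}+e_{A}]_{0}^{1}$.

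\textbf{Direct cost of $A$.} $D_{A}$ equals consumer expenditure, minus tariff revenue $t_{A}(1-d)$, plus export-subsidy outlays $e_{A}z$. Consumer expenditure has two parts.
\begin{itemize}
\item Products $m<d$ are bought at price $c_{A}(m)$.
\item Products $m>d$ are bought at price $c_{B}(m)+t_{A}-e_{B}$.
\end{itemize}
Subtracting tariff revenue cancels the $t_{A}$ term exactly. This leaves
\[
\int_{0}^{d}c_{A}(m)\dd m+\int_{d}^{1}\bigl(c_{B}(m)-e_{B}\bigr)\dd m+e_{A}z .
\]
Now write $c_{A}=c_{B}+(m-m_{0})$. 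The expression becomes the policy-independent constant $\int_{0}^{1}c_{B}$, plus $\int_{0}^{d}(m-m_{0})\dd m=F(d)$, minus $e_{B}(1-d)$, plus $e_{A}z$. This is the claimed formula for $D_{A}$.

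\textbf{Direct cost of $B$.} The computation mirrors the one for $A$.
\begin{itemize}
\item $B$'s consumers pay $c_{A}(m)+t_{B}-e_{A}$ on imported products $m<z$ and $c_{B}(m)$ on the rest.
\item Tariff revenue $t_{B}z$ is subtracted, which removes the $t_{B}$ term.
\item $B$'s subsidy outlays on its exports $1-d$ are added, contributing $e_{B}(1-d)$.
\end{itemize}
Substituting $c_{A}=c_{B}+(m-m_{0})$ on $[0,z]$ again gives the constant $\int_{0}^{1}c_{B}$, plus $F(z)$, minus $e_{A}z$, plus $e_{B}(1-d)$.

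\textbf{Main obstacle.} There is no real difficulty in this lemma. The one point needing care is the price convention: under perfect competition with constant costs, the price equals the winning supplier's delivered cost. The winner's own cost is not the relevant object, and the loser's delivered cost is not the price. With this convention the tariff wedge paid by consumers cancels exactly against tariff revenue. The remaining bookkeeping is to check that the clamping at $0$ and $1$ is consistent with the integrals, which holds because $d,z\in[0,1]$ by construction.
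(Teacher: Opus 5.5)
Your proposal is correct and follows essentially the same route as the paper's proof: you compare delivered costs product by product to obtain the clamped cutoffs, price each product at the winning supplier's delivered cost, subtract tariff revenue, add export-subsidy outlays, and then substitute $c_A-c_B=m-m_0$ to split off the policy-independent constant $\int_0^1 c_B(m)\,\mathrm{d}m$. The only difference is cosmetic: you use strict rather than weak inequalities at the cutoff, which is immaterial because ties occur only on a measure-zero set of products.
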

\begin{proof}
$A$ serves its home market for product $m$ iff $c_{A}(m)\le c_{B}(m)+t_{A}-e_{B}$,
i.e.\ $m\le m_{0}+t_{A}-e_{B}$, and $B$'s market iff $c_{A}(m)+t_{B}-e_{A}\le c_{B}(m)$.
Consumer expenditure in $A$ is $\int_{0}^{d}c_{A}+\int_{d}^{1}(c_{B}+t_{A}-e_{B})$,
tariff revenue is $t_{A}(1-d)$ and subsidy outlays are $e_{A}z$;
substituting $c_{A}-c_{B}=m-m_{0}$ yields $D_{A}$ up to the constant
$\int_{0}^{1}c_{B}$. The expression for $D_{B}$ is symmetric.
\end{proof}
\begin{proof}[Proof of Proposition \ref{prop:nashpreview}] Let
$d\in[0,1]$ denote the measure of products that country $A$ supplies
in its own market and $z\in[0,1]$ the measure it exports to $B$.
By Lemma \ref{lem:margins}, 
\[
d=[m_{0}+t_{A}-e_{B}]_{0}^{1},\qquad z=[m_{0}-t_{B}+e_{A}]_{0}^{1}.
\]
Given the other country's policies, both countries' optimization problems
in $(d,z)$ are concave. Country $B$'s optimality conditions imply
\[
t_{B}=\gamma_{B}\quad\text{if }z>0,\qquad z=0\quad\text{if and only if }\gamma_{B}\geq m_{0}+e_{A},
\]
and, whenever $d<1$, 
\[
e_{B}=\max\{0,\gamma_{B}+d-1\}.
\]

We first show that $d+z=\sigma_{A}$ in every equilibrium. If $d+z>\sigma_{A}$,
at least one of $t_{A}$ and $e_{A}$ is positive, since otherwise
$d+z\leq2m_{0}<\sigma_{A}$. Reducing a positive instrument then lowers
direct cost without reducing strategic utility. If $d+z<\sigma_{A}$
and $d>0$, increasing $d$ has marginal direct cost 
\[
t_{A}=d-m_{0}+e_{B}<\sigma_{A}+\gamma_{B}<\gamma_{A},
\]
and is therefore profitable. If $d=0$, country $B$'s corner condition
implies $\gamma_{B}\geq1+e_{B}$, so country $A$ can open the domestic
margin at marginal cost at most $e_{B}-m_{0}\leq\gamma_{B}-1-m_{0}<\gamma_{A}$.
Hence 
\[
d+z=\sigma_{A}.
\]

Moreover, $d>0$ and $t_{A}>0$. If $d=0$, then $z=\sigma_{A}>0$,
so $t_{B}=\gamma_{B}$ and $e_{A}=\sigma_{A}-m_{0}+\gamma_{B}$. Shifting
a marginal unit of production from exports to the home market then
saves $z+e_{A}=2\sigma_{A}-m_{0}+\gamma_{B}$ while the cost of opening
the domestic margin is at most $e_{B}-m_{0}\leq\gamma_{B}-1-m_{0}$,
a contradiction. If $t_{A}=0$, then $d\leq m_{0}$ and hence $z=\sigma_{A}-d>m_{0}$;
shifting production from exports to the domestic market is again strictly
profitable.

Suppose first that $z>0$. Then $t_{B}=\gamma_{B}$. We claim that
$e_{A}>0$. Otherwise $z=m_{0}-\gamma_{B}$, and optimality of country
$A$ at the corner $e_{A}=0$ requires $t_{A}\leq z$. Using $d+z=\sigma_{A}$
and $t_{A}=d-m_{0}+e_{B}$ gives 
\[
\gamma_{B}\leq\frac{3m_{0}-\sigma_{A}}{2}.
\]
But $m_{0}<1/2$ implies 
\[
\frac{3m_{0}-\sigma_{A}}{2}<\frac{3}{4}-\frac{\sigma_{A}}{2}<\gamma_{B},
\]
a contradiction. Thus $e_{A}>0$, and country $A$'s optimality condition
along $d+z=\sigma_{A}$ is 
\[
t_{A}=z+e_{A}.
\]

We next claim that $e_{B}>0$. If $e_{B}=0$, the cutoff equations
and the preceding optimality condition imply 
\[
d=2z+\gamma_{B}.
\]
Together with $d+z=\sigma_{A}$, this gives $d=(2\sigma_{A}+\gamma_{B})/3$.
But $e_{B}=0$ requires $\gamma_{B}+d-1\leq0$, or 
\[
\gamma_{B}\leq\frac{3}{4}-\frac{\sigma_{A}}{2},
\]
contrary to the assumption. Hence 
\[
e_{B}=\gamma_{B}+d-1>0.
\]
Substituting the cutoff equations into $t_{A}=z+e_{A}$ now yields
\[
d-z=\frac{1}{2}.
\]
Thus exactly one half of all products are internationally traded.
Furthermore, 
\[
t_{A}=z+e_{A}>e_{A}>0,\qquad t_{B}-e_{B}=1-d>0.
\]

It remains to consider $z=0$. Then $d=\sigma_{A}>1/2$, so only the
products $m>d$ are internationally traded and their measure is $1-\sigma_{A}<1/2$.
Moreover, 
\[
e_{B}=\gamma_{B}+\sigma_{A}-1>0,
\]
where positivity follows from $\gamma_{B}>3/4-\sigma_{A}/2$ and $\sigma_{A}>1/2$.
Country $B$'s corner condition gives $e_{A}\leq\gamma_{B}-m_{0}$,
while 
\[
t_{A}=\sigma_{A}-m_{0}+e_{B},
\]
so 
\[
t_{A}-e_{A}\geq2\sigma_{A}-1>0.
\]
Optimality of $z=0$ for country $A$ also requires $t_{B}-m_{0}\geq t_{A}$,
and hence 
\[
t_{B}\geq\sigma_{A}+e_{B}>e_{B}.
\]

Finally, an equilibrium with $z>0$ exists. Set 
\[
d=\frac{\sigma_{A}}{2}+\frac{1}{4},\qquad z=\frac{\sigma_{A}}{2}-\frac{1}{4},
\]
and choose 
\[
t_{B}=\gamma_{B},\qquad e_{B}=\gamma_{B}+d-1,\qquad e_{A}=z-m_{0}+\gamma_{B},\qquad t_{A}=z+e_{A}.
\]
The assumptions imply $z>0$, $e_{A}>0$, and $e_{B}>0$. The displayed
conditions are exactly the first-order conditions of the two concave
optimization problems, and $t_{A}<\sigma_{A}+\gamma_{B}<\gamma_{A}$
ensures that country $A$ does not prefer to fall short of its target.
Hence these policies form a Nash equilibrium.

The constrained-efficient allocation has $d=z=\sigma_{A}/2$. In every
Nash equilibrium derived above either $d-z=1/2$ or $z=0$ and $d=\sigma_{A}$,
so no Nash equilibrium is constrained efficient. \end{proof}

\begin{proof}[Proof of Proposition \ref{prop:x_guarantee}]
With inelastic unit demand, $Q_{i}=1+Q_{i}^{exp}-Q_{i}^{imp}$, so
the certificate constraint $Q_{i}^{imp}\le Q_{i}^{exp}+a_{i}$ is
equivalent to $Q_{i}\ge1-a_{i}$, whatever prices, subsidies or barriers
prevail. With $a_{A}=1-\sigma_{A}$ and $a_{B}=0$ this is $Q_{A}\ge\sigma_{A}$
and $Q_{B}\ge1$.
\end{proof}
\begin{proof}[Proof of Proposition \ref{prop:ticpreview}]
Let $S_{A}(q)=Q_{A}-\sigma_{A}$ and $S_{B}(q)=Q_{B}-1$. By the
identity in the proof of Proposition \ref{prop:x_guarantee}, $S_{i}(q)$
is the net creation of country-$i$ certificates, so certificate-market
equilibrium means $S_{i}(q)\ge0$, $\pi_{i}\ge0$, $\pi_{i}S_{i}(q)=0$.
For product $m$ and market $A$, the delivered costs inclusive of
certificate payments are $c_{A}(m)$ for $A$ and $c_{B}(m)+\pi_{A}-\pi_{B}$
for $B$; in market $B$ they are $c_{A}(m)-\pi_{A}+\pi_{B}$ and
$c_{B}(m)$. Hence the competitive allocation assigns every unit to
the producer that maximizes $-c_{i}(m)+\pi_{i}$, i.e.\ it maximizes
$L_{\pi}(q)=W(q)+\pi_{A}S_{A}(q)+\pi_{B}S_{B}(q)$ over $\mathcal{Q}$.
For any $\hat{q}$ satisfying both targets, 
\[
W(q)=L_{\pi}(q)\ge L_{\pi}(\hat{q})\ge W(\hat{q}),
\]
so $q$ is constrained efficient. Since $c_{A}-c_{B}$ is increasing
in $m$, the constrained optimum lets $A$ serve both markets for
$m<\sigma_{A}/2$, giving $Q_{A}=\sigma_{A}$ and $Q_{B}=2-\sigma_{A}>1$;
thus $\pi_{B}=0$, and indifference at $m=\sigma_{A}/2$ gives $\pi_{A}=\sigma_{A}/2-m_{0}$.
The pair $(\pi_{A},\pi_{B})$ satisfies the Kuhn--Tucker conditions
of the constrained problem and is therefore its vector of shadow prices.
\end{proof}
\begin{proof}[Proof of Proposition \ref{prop:no_tic_preview}]
Let $p^{*}=\sigma_{A}/2-m_{0}$. Under $t_{A}=e_{A}=p^{*}$, $t_{B}=e_{B}=0$,
Lemma \ref{lem:margins} gives $d=z=\sigma_{A}/2$, the allocation
of Proposition \ref{prop:ticpreview}, and consumer prices coincide
with those under TIC (in $A$: $\min\{c_{A},c_{B}+p^{*}\}$; in $B$:
$\min\{c_{A}-p^{*},c_{B}\}$). Net government revenue of $A$ is $p^{*}(1-\sigma_{A}/2)-p^{*}\sigma_{A}/2=p^{*}(1-\sigma_{A})$
under the tariff-subsidy agreement and $\pi_{A}a_{A}=p^{*}(1-\sigma_{A})$
from the certificate auction under TIC; $B$ has no revenue under
either. Hence $D_{A}$, $D_{B}$, $Q_{A}$ and $Q_{B}$, and therefore
both payoffs, coincide.
\end{proof}
\begin{proof}[Proof of Proposition \ref{prop:deviations_preview}]
\emph{TIC agreement.} Let $\pi_{A}\ge0$ be $A$'s certificate price.
In the ranges considered first, $B$'s certificate constraint is slack;
the cases in which it binds are treated separately below. With additional
export subsidies $e_{i}$, production subsidies $s_{i}$ and non-tariff
barriers $b_{i}$, the delivered costs are $c_{A}-s_{A}$ and $c_{B}-s_{B}-e_{B}+\pi_{A}+b_{A}$
in market $A$, and $c_{A}-s_{A}-e_{A}-\pi_{A}+b_{B}$ and $c_{B}-s_{B}$
in market $B$, so 
\[
d=m_{0}+\pi_{A}+b_{A}+s_{A}-s_{B}-e_{B},\qquad z=m_{0}+\pi_{A}+s_{A}+e_{A}-s_{B}-b_{B}.
\]
Since $A$'s TIC remains in force, $Q_{A}=d+z\ge\sigma_{A}$ and thus
$Q_{B}\le2-\sigma_{A}$: no deviation improves either country's strategic
term. While $\pi_{A}>0$ we have $d+z=\sigma_{A}$, and collecting
consumer expenditure, subsidy outlays and auction revenue $\pi_{A}(1-\sigma_{A})$
as in Lemma \ref{lem:margins} gives, up to constants, 
\begin{align*}
D_{A} & =F(d)+(\pi_{A}+s_{A}+e_{A})z+(b_{A}-s_{B}-e_{B})(1-d),\\
D_{B} & =F(z)+(b_{B}-s_{A}-e_{A}-\pi_{A})z+(e_{B}+s_{B})(1-d).
\end{align*}
Let $p^{*}=\sigma_{A}/2-m_{0}$. Consider one instrument at a time,
the others being zero; the binding certificate constraint pins $\pi_{A}$.
\begin{itemize}
\item $e_{A}$: $\pi_{A}=p^{*}-e_{A}/2$, $d=(\sigma_{A}-e_{A})/2$, $z=(\sigma_{A}+e_{A})/2$,
so $\Delta D_{A}=\frac{e_{A}}{8}(2\sigma_{A}+3e_{A})>0$.
\item $b_{A}$: $\pi_{A}=p^{*}-b_{A}/2$, $d=(\sigma_{A}+b_{A})/2$, $z=(\sigma_{A}-b_{A})/2$,
so $\Delta D_{A}=\frac{b_{A}}{8}(8-6\sigma_{A}-b_{A})>0$.
\item $s_{A}$: $\pi_{A}=p^{*}-s_{A}$, $d=z=\sigma_{A}/2$: the allocation
and $\pi_{A}+s_{A}$ are unchanged, so $\Delta D_{A}=0$.
\item $e_{B}$: $\pi_{A}=p^{*}+e_{B}/2$, $d=(\sigma_{A}-e_{B})/2$, $z=(\sigma_{A}+e_{B})/2$,
so $\Delta D_{B}=\frac{e_{B}}{8}(8-6\sigma_{A}+3e_{B})>0$.
\item $b_{B}$: $\pi_{A}=p^{*}+b_{B}/2$, $d=(\sigma_{A}+b_{B})/2$, $z=(\sigma_{A}-b_{B})/2$,
so $\Delta D_{B}=\frac{b_{B}}{8}(2\sigma_{A}-b_{B})>0$.
\item $s_{B}$: $\pi_{A}=p^{*}+s_{B}$, $d=z=\sigma_{A}/2$, so $\Delta D_{B}=(1-\sigma_{A})s_{B}>0$.
\end{itemize}
These expressions cover the ranges in which $A$'s TIC binds and both
trade margins are interior: $e_{A},b_{A}\le2p^{*}$, $s_{A}\le p^{*}$,
and $e_{B},b_{B}\le\sigma_{A}$. It remains to check larger deviations.
For $A$, once $\pi_{A}=0$, an export subsidy satisfies $d=m_{0},\ z=m_{0}+e_{A}$
until $B$'s TIC binds at $e_{A}=1-2m_{0}$; over this interval $dD_{A}/de_{A}=m_{0}+2e_{A}>0$.
While $B$'s TIC binds and $e_{A}\le1$, $d=(1-e_{A})/2$ and $dD_{A}/de_{A}=(3e_{A}+1)/4>0$;
for $e_{A}\ge1$ the allocation is $(d,z)=(0,1)$. The corner conditions
imply $m_{0}\le\pi_{B}\le m_{0}+e_{A}-1$, so $D_{A}=e_{A}-\pi_{B}\ge1-m_{0}$
up to the same constant, which is its value at $e_{A}=1$. For $b_{A}$,
after $\pi_{A}$ reaches zero, $dD_{A}/db_{A}=1-m_{0}-b_{A}>0$ until
$B$'s TIC binds, and $dD_{A}/db_{A}=(1-b_{A})/4\ge0$ while it binds;
after $b_{A}=1$ imports are zero and $D_{A}$ is constant. For $s_{A}>p^{*}$,
$d=z=m_{0}+s_{A}$ and $dD_{A}/ds_{A}=m_{0}+3s_{A}>0$ until $Q_{A}=1$;
thereafter $d=z=1/2$ and the higher subsidy is offset by $B$'s certificate
price, so $D_{A}$ is constant. For $B$, $A$'s TIC remains binding
for all larger deviations. If $e_{B}\ge\sigma_{A}$, then $(d,z)=(0,\sigma_{A})$
and $D_{B}$ rises one-for-one with $e_{B}$; if $b_{B}\ge\sigma_{A}$,
then $(d,z)=(\sigma_{A},0)$ and $D_{B}$ is constant at a level strictly
above baseline. Finally, for every $s_{B}\ge0$, $\Delta D_{B}=(1-\sigma_{A})s_{B}\ge0$.
Hence no country has a strictly profitable deviation, and $s_{A}\le p^{*}$
is exactly neutral.

\emph{Tariff-subsidy agreement.} With $t_{A}=e_{A}=p^{*}$, $t_{B}=e_{B}=0$
and no TIC, a deviation by $B$ gives $d=\sigma_{A}/2-e_{B}-s_{B}$,
$z=\sigma_{A}/2-s_{B}-b_{B}$, $Q_{B}=2-\sigma_{A}+e_{B}+2s_{B}+b_{B}$
and $D_{B}=\text{const}+F(z)+(b_{B}-p^{*})z+(e_{B}+s_{B})(1-d)$.
Differentiating at zero gives 
\[
\frac{\partial u_{B}}{\partial e_{B}}=\gamma_{B}-\Bigl(1-\frac{\sigma_{A}}{2}\Bigr),\qquad\frac{\partial u_{B}}{\partial s_{B}}=2\gamma_{B}-\Bigl(1-\frac{\sigma_{A}}{2}\Bigr),\qquad\frac{\partial u_{B}}{\partial b_{B}}=\gamma_{B}-\frac{\sigma_{A}}{2},
\]
which are all strictly positive when $\gamma_{B}>1-\sigma_{A}/2$
(using $\sigma_{A}<1$). Each deviation lowers $Q_{A}=d+z$ below
$\sigma_{A}$.
\end{proof}

\subsection*{A.2 Proofs for Section 3}

\begin{proof}[Proof of Proposition \ref{prop:g_representation}]
Since $S_{i}$ is linear and the matched coefficients are given by
(\ref{eq:matched_tic}), 
\[
A_{i}(q)=\sum_{m\in\mathcal{M}}\sum_{j,k\in\mathcal{N}}\frac{\partial S_{i}(q)}{\partial q_{jk}(m)}q_{jk}(m)=S_{i}(q).
\]
\end{proof}

\begin{proof}[Proof of Proposition \ref{prop:rtp}] Every resulting
competitive equilibrium must satisfy country $i$'s certificate-balance
condition $A_{i}(q)\geq0$. By Proposition \ref{prop:g_representation},
$A_{i}(q)=S_{i}(q)$, independently of the additional direct policy
instruments. Hence $S_{i}(q)\geq0$. \end{proof}

\begin{proof}[Proof of Proposition \ref{prop:g_efficiency}] Let
$(q,y^{F},y^{I},p,\pi)$ be a competitive equilibrium. Adding consumer
optimality and production-trading optimality, and using goods-market
clearing, shows that for every $(\widehat{q},\widehat{y}^{F},\widehat{y}^{I})$
satisfying $\widehat{y}_{j}^{F}(m)+\widehat{y}_{j}^{I}(m)=\sum_{i}\widehat{q}_{ij}(m)$,
\[
\sum_{i}V_{i}(y_{i}^{F})-C(q,y^{I})+\sum_{i}\pi_{i}A_{i}(q)\geq\sum_{i}V_{i}(\widehat{y}_{i}^{F})-C(\widehat{q},\widehat{y}^{I})+\sum_{i}\pi_{i}A_{i}(\widehat{q}).
\]
Taking $\widehat{q}=q$ first shows that the equilibrium split between
final and intermediate use attains $W(q)$. Maximizing the right-hand
side over use splits compatible with $\widehat{q}$ therefore gives
\[
W(q)+\sum_{i}\pi_{i}A_{i}(q)\geq W(\widehat{q})+\sum_{i}\pi_{i}A_{i}(\widehat{q})\qquad\text{for every }\widehat{q}\in\mathcal{Q}.
\]
By Proposition \ref{prop:g_representation}, $A_{i}(q)=S_{i}(q)$.
Certificate-market clearing gives $S_{i}(q)\geq0$ and $\pi_{i}S_{i}(q)=0$.
Hence, for every $\widehat{q}$ satisfying all strategic targets,
\[
W(q)=W(q)+\sum_{i}\pi_{i}S_{i}(q)\geq W(\widehat{q})+\sum_{i}\pi_{i}S_{i}(\widehat{q})\geq W(\widehat{q}).
\]
Thus $q$ is constrained efficient, and $\pi_{i}$ is a Lagrange multiplier,
hence a shadow price, of constraint (\ref{eq:g_constr}).

A constrained-efficient allocation exists because $q=0$ is feasible
with $W(0)=0$, while the feasible set intersected with the compact
upper contour set $\{q\in\mathcal{Q}:W(q)\geq0\}$ is closed and compact.

Conversely, let $q^{*}$ be any constrained-efficient allocation and
choose $(y^{F*},y^{I*})$ attaining $W(q^{*})$ in (\ref{eq:g_welfare}).
Then $(q^{*},y^{F*},y^{I*})$ solves the equivalent extended planner
problem that maximizes $\sum_{i}V_{i}(y_{i}^{F})-C(q,y^{I})$ subject
to goods-market balance, nonnegativity, and all strategic constraints.
Its feasible set is polyhedral because all these constraints are linear.
Therefore the first-order optimality conditions admit multipliers
$p_{i}(m)$ for goods-market balance and $\pi_{i}\geq0$ for the strategic
constraints, without an additional constraint qualification. The corresponding
Lagrangian is
\[
\sum_{i}V_{i}(y_{i}^{F})-C(q,y^{I})+\sum_{i,m}p_{i}(m)\left[\sum_{j}q_{ji}(m)-y_{i}^{F}(m)-y_{i}^{I}(m)\right]+\sum_{i}\pi_{i}S_{i}(q).
\]
Its maximization over $y^{F}$ gives the consumer problems, while
maximization over $(q,y^{I})$ gives the production and trading problem
in the equilibrium definition. Since $V_{i}$ is nondecreasing, the
supporting goods prices satisfy $p_{i}(m)\geq0$. Proposition \ref{prop:g_representation}
and complementary slackness give certificate-market feasibility and
clearing. Thus every constrained-efficient allocation can be decentralized
as a competitive equilibrium under matched TIC. Since a constrained-efficient
allocation exists, a competitive equilibrium exists as well. \end{proof}

\begin{proof}[Proof of Proposition \ref{prop:g_strategic_nonadditivity}]
Since $\pi_{i}^{0}>0$, certificate-market complementary slackness
and Proposition \ref{prop:g_representation} imply $S_{i}(q^{0})=0$.
After the additional policies are introduced, the unchanged matched
TIC still requires 
\[
S_{i}(q^{1})\geq0,\qquad\pi_{i}^{1}S_{i}(q^{1})=0.
\]
Hence either $S_{i}(q^{1})=S_{i}(q^{0})=0$, or $S_{i}(q^{1})>0$
and $\pi_{i}^{1}=0$. In the latter case all terms generated by country
$i$'s TIC vanish from the production and trading problem, so removing
that TIC leaves $q^{1}$ as an equilibrium allocation under the remaining
policies. \end{proof}

\subsection*{A.3 Proofs for Section 4}

\begin{proof}[Proof of Proposition \ref{prop:S_star}] Coercive
bargaining is prevented if and only if 
\[
L_{i}^{y}(q)+L_{i}^{x}(q)\leq L_{j}^{y}(q)+L_{j}^{x}(q),
\]
because $L_{j}(q)\geq0$ and $L_{i}(q)=\max\{L_{i}^{y}(q)+L_{i}^{x}(q),0\}$.
Substituting the definitions of crisis availability and the four loss
components, and collecting terms gives 
\begin{align*}
 & \sum_{m\in\mathcal{M}}\omega_{i}^{y}(m)\left[\sum_{k\in\mathcal{N}}d_{ik}(m)q_{ik}(m)+\sum_{k\in\mathcal{N}\setminus\{i\}}g_{ki}(m)q_{ki}(m)-(1-\alpha_{i})y_{i}(m)\right]\\
 & \quad+\sum_{m\in\mathcal{M}}\left[\omega_{j}^{y}(m)\eta_{ij}(m)-\omega_{i}^{x}(m)\chi_{ij}(m)\right]q_{ij}(m)\\
 & \quad+\sum_{m\in\mathcal{M}}\omega_{j}^{x}(m)\chi_{ji}(m)q_{ji}(m)\geq0.
\end{align*}
Using (\ref{eq:mf_map_w})--(\ref{eq:mf_map_s}), this inequality
is exactly the linear surplus target $S_{i}^{*}$. Hence 
\[
\mathcal{F}_{i}^{*}(S_{i}^{*})=\mathcal{P}_{i}.
\]

Now consider any strategic target $S_{i}\in\mathcal{S}_{i}$ that
prevents coercion. By definition, 
\[
\mathcal{F}_{i}^{*}(S_{i})\subseteq\mathcal{P}_{i}.
\]
Since $\mathcal{P}_{i}=\mathcal{F}_{i}^{*}(S_{i}^{*})$, it follows
immediately that 
\[
\mathcal{F}_{i}^{*}(S_{i})\subseteq\mathcal{F}_{i}^{*}(S_{i}^{*}).
\]
\end{proof}

\begin{proof}[Proof of Proposition \ref{prop:optimal_target}]
By Proposition \ref{prop:S_star}, $\mathcal{F}_{i}^{*}(S_{i}^{*})=\mathcal{P}_{i}$
for every country $i$. Moreover, $q=0$ belongs to $\mathcal{P}_{i}$
for every $i$ because all crisis-loss terms are zero at $q=0$. Hence
\[
\bigcap_{i\in\mathcal{N}}\mathcal{F}_{i}^{*}(S_{i}^{*})\neq\emptyset,
\]
so $S^{*}\in\mathcal{S}^{P}$.

Now let $S\in\mathcal{S}^{P}$. Since every $S_{i}$ prevents coercion,
\[
\mathcal{F}_{i}^{*}(S_{i})\subseteq\mathcal{P}_{i}=\mathcal{F}_{i}^{*}(S_{i}^{*})
\]
for every country $i$. Therefore
\begin{equation}
\bigcap_{i\in\mathcal{N}}\mathcal{F}_{i}^{*}(S_{i})\subseteq\bigcap_{i\in\mathcal{N}}\mathcal{F}_{i}^{*}(S_{i}^{*}).\label{eq:mf_joint_inclusion}
\end{equation}
Taking the supremum of $W$ over the two sets gives $\overline{W}(S)\leq\overline{W}(S^{*})$.

It remains to show that the supremum under $S^{*}$ is attained. The
feasible set $\bigcap_{i\in\mathcal{N}}\mathcal{P}_{i}$ is closed
because each $\mathcal{P}_{i}$ is defined by a linear weak inequality.
It contains $q=0$, and $W(0)=0$. Hence its welfare supremum is unchanged
if we restrict attention to allocations satisfying $W(q)\geq0$. The
resulting set is a nonempty closed subset of the compact upper contour
set assumed in Section \ref{sec:general_tic}, and is therefore compact.
Since $W$ is continuous, it attains a maximum on this set. \end{proof}

\end{document}